\documentclass[submission,copyright,creativecommons]{eptcs}
 % Name of the event you are submitting to
\usepackage{breakurl}             % Not needed if you use pdflatex only.
\usepackage{underscore}           % Only needed if you use pdflatex.

% Package loading comes here (if any)
\usepackage{amsmath}
\usepackage{amsthm}
\usepackage{amssymb}
\usepackage{multicol}

\usepackage{thmtools}

% User-defined new commands:

\newcommand{\resp}{respectively}
\newcommand{\dfa}{\text{DFA}}
\newcommand{\pfa}{\text{PFA}}

\newcommand{\N}{\ensuremath{\mathbb{N}}}

\newcommand{\eps}{\varepsilon}
\newcommand{\dsc}{\sc}

\renewcommand{\sc}{\mathrm{sc}}

\newcommand{\asc}{\mathrm{asc}}

\newtheorem{thm}{Theorem}
\newtheorem{lem}[thm]{Lemma}
\newtheorem{con}[thm]{Conjecture}
\newtheorem{cor}[thm]{Corollary}
\newtheorem{exa}[thm]{Example}

% User-defined new environments (if any):

\title{On the Accepting State Complexity of Operations on Permutation Automata}

\author{Christian Rauch\qquad\qquad Markus Holzer 
\institute{Institut f\"ur Informatik, Universit\"at Giessen,
  Arndstr.~2, 35392 Giessen, Germany}
\email{\quad christian.rauch@informatik.uni-giessen.de\quad\qquad
  holzer@informatik.uni-giessen.de} }

\begin{document}
 
\maketitle

\begin{abstract}
  We investigate the accepting state complexity of deterministic
  finite automata for regular languages obtained by applying one of
  the following operations to languages accepted by permutation
  automata: union, quotient, complement, difference, intersection,
  Kleene star, Kleene plus, and reversal. The paper thus joins the
  study of accepting state complexity of regularity preserving
  language operations which was initiated by the work [J. Dassow: On
  the number of accepting states of finite automata, J.~Autom.,
  Lang. Comb., 21, 2016].  We show that for almost all of the
  operations, except for reversal and quotient, there is no difference
  in the accepting state complexity for permutation automata compared
  to deterministic finite automata in general. For both reversal and
  quotient we prove that certain accepting state complexities cannot
  be obtained; these number are called ``magic'' in the
  literature. Moreover, we solve the left open accepting state
  complexity problem for the intersection of unary languages accepted
  by permutation automata and deterministic finite automata in
  general.

  % \keywords permutation automata, language operations, accepting
  % state complexity, descriptional complexity.
\end{abstract}

\section{Introduction}\label{sec:in}

The state complexity of a regular language is a classical
well-understood descriptional complexity measure of finite state
systems, that is defined to be the number of states of the smallest,
either deterministic or nondeterministic, finite automaton that
recognizes it. It has been studied from different perspectives in the
literature like, for instance, (i) for regular languages in general
and for certain sub-families, (ii) for converting nondeterministic
finite automata to equivalent deterministic finite automata, and (iii)
for operations, called the operational complexity, on regular
languages in general and sub-families thereof. For a brief survey on
the subject we refer to, e.g.,~\cite{GMRY16}.

Recently, the accepting state complexity of a regular language was
introduced in~\cite{Da16}. It is defined to be the minimal number of
accepting states needed for a finite state device, either
deterministic or nondeterministic, that accepts it. While the
accepting state complexity forms a strict hierarchy of language
classes for deterministic finite automata, it collapses for
nondeterministic state devices, since every regular language not
containing the empty word is accepted by a nondeterministic finite
automaton with a single final state. If the empty word belongs to the
language, the nondeterministic accepting state complexity is at most
two. Thus, the conversion from nondeterministic to equivalent
deterministic finite automata can produce unbounded deterministic
accepting state complexity for a regular language. Moreover, the
operational accepting state complexity was studied
in~\cite{HoHo18}. The obtained results on the accepting state
complexity prove that this measure is significantly different to the
original state complexity. What is missing for the accepting state
complexity is a study for certain sub-families of the regular
languages in order to better understand the intrinsic behaviour of
this measure.

We close this gap by studying the operational accepting state
complexity for the class of permutation automata (\pfa s) which accept
the so called p-regular languages, also named pure-group
languages. This language family is of particular interest from an
algebraic point of view since their syntactic monoid induces a
group. Additionally permutation automata together with
permutation-reset automata play a key role in the decomposition of
deterministic finite automata (\dfa s), see, e.g.,~\cite{Ze67}.  It is
also worth to mention that the class of p-regular languages was one of
the first subclasses of the regular languages for which the star
height problem was shown to be decidable, see,
e.g.,~\cite{Br80}. Recently, the family of p-regular languages, and
thus \pfa s, gained renewed interest.  For instance, in~\cite{JMW21}
the decomposing of \pfa s into the intersection of smaller automata of
the same kind was investigated.  Moreover the operational state
complexity on \pfa s was studied in~\cite{HoMl20}. Up to our knowledge
the operational accepting state complexity of p-regular languages was
not investigated so far.  We study this problem by examining the
following question:
\begin{quote}
  Given are three non-negative integers~$m$,\ $n$, and $\alpha$ and a
  regularity preserving language operation~$\circ$, are there minimal
  permutation automata~$A$ and~$B$ with accepting state complexity~$m$
  and~$n$, \resp, such that the language~$L(A)\circ L(B)$ is accepted
  by a minimal deterministic finite automaton with~$\alpha$ accepting
  states?
\end{quote}
Following the terminology of~\cite{IKT00} we call values~$\alpha$
``magic'' if there are no such automata~$A$ and~$B$. The following
results were shown in~\cite{Da16} and~\cite{HoHo18} for the
operational accepting state complexity on languages accepted by \dfa
s---for accepting state complexities~$m$ and~$n$ one can obtain all
values from the given number set for~$\alpha$:
%\begin{multicols}{2}
  \begin{itemize}
  \item Complement: $\N\cup\{\,0 \mid m=1\,\}$.
  \item Kleene star and Kleene plus: $\N$.
  \item Union: $\N$.
  \item Set difference: $\N$.
  \item Intersection: $[0,mn]$ (if the input alphabet is at least
    binary).
  \item Reversal: $\N$.
  \item Quotient: $\N\cup\{ 0\}$.
  \end{itemize}
%\end{multicols}
One may have noticed that for none of the above mentioned operations
magic numbers exist up to the special case of complementation
and~\mbox{$m=1$}.\footnote{The intersection of two languages~$L_1$
  and~$L_2$ of accepting state complexity~$m$ and~$n$ is accepted by
  the cross product of the minimal \dfa s accepting~$L_1$
  and~$L_2$. So the accepting state complexity is directly bounded
  by~$mn$. Therefore the numbers greater~$mn$ are not of interest.}
It is worth mentioning that these results were obtained without the
use of \pfa s automata witnesses.

We generalize these results for the operations complement, union, set
difference, Kleene star, and Kleene plus to the family of languages
accepted by \pfa s. This means even though \pfa s are restricted in
their expressive power compared to arbitrary \dfa s, there are no
magic numbers for the accepting state complexity of \pfa s w.r.t.\ the
above mentioned operations.  When considering the reversal operation a
significant difference appears. While for \dfa s the reversal
operation induces the whole set~$\N$ as accepting state complexities
as mentioned above, this is not the case for \pfa s, where we can
prove that the number $\alpha=1$ is magic for every $m\geq 2$. In
fact, we prove that for $m=2$ no other magic number as $\alpha=1$
exists. Whether this is also true for larger~$m$ is left open. Yet
another difference in the accepting state complexity comes from the
quotient operation. Here it turns out that for unary languages
accepted by \pfa s only the range $[1,mn]$ is obtainable for the
accepting state complexity. This is entirely different compared to the
general case.  Finally, the unary case for the accepting state
complexity of the intersection operation for \dfa s in general was
left open in~\cite{HoHo18}. We close this gap by considering this
problem in detail. In this way, we identify a whole range of magic
numbers for the intersection of unary languages accepted by \pfa s and
extend this result to the case of \dfa s, solving the left open
problem mentioned above.  Due to space constraints some of the proofs
are omitted; they can be found in the full version of this paper.

\section{Preliminaries} \label{sec:defs}

Let~$\N$ denote the set of all positive integers and~$\N_{\geq x}$
($\N_{\leq x}$, \resp) the set of all positive integers that are
greater or equal~$x$ (less or equal~$x$, \resp).

We recall some definitions on finite automata as contained
in~\cite{Ha78}.  Let~$\Sigma^*$ denote the set of all words over the
finite alphabet~$\Sigma$.  The \emph{empty word} is denoted by~$\eps$.
Further, we denote the set $\{i,i+1,\ldots,j\}$ by $[i,j]$, if~$i$
and~$j$ are integers.  A \emph{deterministic finite automaton} (\dfa)
is a quintuple $A=(Q,\Sigma,{}\cdot{},s,F)$, where~$Q$ is the finite
set of \textit{states}, $\Sigma$ is the finite set of \textit{input
  symbols}, $s\in Q$ is the \textit{initial state}, $F\subseteq Q$ is
the set of \textit{accepting states}, and the \textit{transition
  function}~$\cdot$ maps~$Q\times\Sigma$ to~$Q$.  The \emph{language
  accepted} by the \dfa~$A$ is defined as
$$L(A) =\{\,w\in \Sigma^*\mid\mbox{$s\cdot w\in F$}\,\},$$ 
where the transition function is recursively extended to a map
$Q\times\Sigma^*$ onto~$Q$. Obviously, every letter $a\in\Sigma$
induces a mapping on the state set~$Q$ to~$Q$ by $q\mapsto q\cdot a$,
for every $q\in Q$. A \dfa\ is \textit{unary}, if the input
alphabet~$\Sigma$ is a singleton set, that is, ~$\Sigma = \{a\}$, for
some input symbol~$a$. Moreover, if every letter of the automaton
induces only permutations on the state set, then we simply speak of a
\emph{permutation} automaton~(\pfa).

As usual we denote the \emph{state complexity} of a language~$L$
accepted by a \dfa\ by
$$\dsc(L)=\min\{\,\dsc(A)\mid\mbox{$A$ is a \dfa\ with $L=L(A)$}\,\},$$
where $\dsc(A)$ refers to the number of states of the
automaton~$A$. Similarly we define the measure $\asc(L)$ the
\emph{accepting state complexity} of a language~$L$ accepted by a
\dfa, where $\asc(A)$ refers to the number of final states of the
automaton~$A$.
 
An automaton is \emph{minimal} (a-minimal, \resp) if it admits no
smaller equivalent automaton w.r.t.\ the number of states (final
states, \resp). For \dfa s both properties can be easily
verified. Minimality can be shown if all states are reachable from the
initial state and all states are pairwise inequivalent. For
a-minimality the following result shown in~\cite{Da16} applies:

\begin{thm}
  Let~$L$ be a language accepted by a minimal \dfa~$A$. Then
  $\asc(L)=\asc(A)$.
\end{thm}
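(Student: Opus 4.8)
The inequality $\asc(L)\le\asc(A)$ is immediate from the definition of $\asc(L)$ as a minimum taken over all \dfa s accepting $L$, since $A$ itself is one such automaton. The entire content of the statement therefore lies in the reverse inequality $\asc(L)\ge\asc(A)$: I must show that \emph{no} \dfa\ for $L$ can get by with fewer accepting states than the minimal automaton~$A$.

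The plan is to exploit the fact that the minimal \dfa\ is the canonical quotient of any other \dfa\ accepting the same language. Concretely, let $B=(P,\Sigma,{}\cdot{},s',F')$ be an arbitrary \dfa\ with $L(B)=L$. First I would discard all states of $B$ that are not reachable from~$s'$; this preserves the accepted language and does not increase the number of accepting states, so it suffices to prove the bound for the trimmed automaton. On the trimmed automaton there is the standard surjective map $\phi$ onto the minimal \dfa\ $A=(Q,\Sigma,{}\cdot{},s,F)$, defined by sending a reachable state $p=s'\cdot w$ to the state $s\cdot w$ of~$A$. Well-definedness follows from the Myhill--Nerode characterisation: two words reaching the same state of~$B$ are right-congruent with respect to~$L$ and hence reach the same state of~$A$.

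The crucial observation is that $\phi$ preserves the acceptance status exactly: for $p=s'\cdot w$ one has $p\in F'$ iff $w\in L$ iff $\phi(p)=s\cdot w\in F$. Consequently $\phi$ restricts to a \emph{surjection} from the accepting states of the trimmed~$B$ onto~$F$: every $q\in F$ is reachable in~$A$, as all states of a minimal \dfa\ are, say $q=s\cdot w$, and then the reachable state $s'\cdot w$ of~$B$ is accepting and is mapped to~$q$. Since a surjection cannot decrease cardinality, the number of accepting states of~$B$ is at least $|F|=\asc(A)$. Taking the minimum over all~$B$ yields $\asc(L)\ge\asc(A)$, and together with the trivial inequality this gives $\asc(L)=\asc(A)$.

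I expect the only genuinely delicate point to be the verification that $\phi$ is well defined and acceptance-preserving, that is, the precise link between the transition structure of an arbitrary \dfa\ and the right-congruence classes underlying the minimal automaton. Once this homomorphism is in hand, the counting argument is entirely routine.
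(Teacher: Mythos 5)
Your argument is correct: trimming unreachable states, projecting onto the minimal \dfa\ via the canonical quotient map, and observing that this map restricts to a surjection of accepting states onto accepting states gives exactly $\asc(L)\ge\asc(A)$, and the reverse inequality is immediate. The paper itself gives no proof of this statement --- it is quoted from Dassow's work \cite{Da16} --- and your homomorphism-plus-counting argument is the standard one used there, so there is nothing to flag.
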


In order to characterize the behaviour of complexities under
operations we introduce the following notation: for
$c\in\{\dsc,\asc\}$, a $k$-ary regularity preserving operation~$\circ$
on languages, and natural numbers $n_1,n_2,\ldots,n_k$, we define
$$g^c_\circ(n_1,n_2,\ldots,n_k)$$ 
as the set of all integers~$r$ such that there are~$k$ regular
languages $L_1,L_2,\ldots,L_k$ with $c(L_i)=n_i$, for $1\leq i\leq k$,
and $c(\circ(L_1,L_2,\ldots,L_k))=r$. In case we only consider unary
languages $L_1,L_2,\ldots, L_k$ we simply write~$g^{c,u}_\circ$
instead. When restricting the underlying languages to, e.g., be
accepted by permutation automata (\pfa s), we indicate this by writing
$g^c_{\circ,\pfa}$ and $g^{c,u}_{\circ,\pfa}$, \resp.

In order to explain the notation we give a small example.

\begin{exa}\label{exa:asc-for-complementation}
  Consider the unary operation~$C$ of complementation of languages. It
  is obvious that
$$g^{\dsc}_C(m)=\{m\},\quad\mbox{for $m\geq 1$.}$$
On the other hand, when we consider the accepting state complexity,
in~\cite{Da16} the following behaviour
$$
g^{\asc}_C(m)=
\begin{cases}
  \{1\} & \mbox{if $m=0$,}\\
  \{0\}\cup\mathbb{N} & \mbox{if $m=1$,}\\
  \mathbb{N} & \mbox{otherwise,}
\end{cases}
$$
for the complementation was proven. Moreover, it is easy to see that
$$g^{\sc,u}_C(m)=g^{\sc}_C(m)\quad\mbox{and}\quad g^{\asc,u}_C(m)=g^{\asc}_C(m)$$
holds.\qed
\end{exa}

In the constructions to come, note that we will use the~$\bmod$
operation such that~$x\bmod y +z$ is the same as~$(x\bmod y) +z$ and
not equal to~$x\bmod (y +z)$, but $x+y\bmod z$ is the same as
$(x+y)\bmod z$.  We use~$\div$ for the integer division and~$/$ for the
ordinary division.

\section{Results}
\label{sec:results}

We investigate the accepting state complexity of various regularity
preserving language operations such as union, quotient, complement,
difference, intersection, Kleene star, Kleene plus, and reversal on
languages accepted by permutation automata. Before we start our
investigation we introduce a useful notion for \emph{unary}
permutation automata by strings. Since a unary permutation automaton
consists of a cycle only, it suffices to encode the finality of these
cycle states by a binary string. This is done as follows: a word
$w\in\{0,1\}^+$ with $w=a_0a_1\ldots a_{|w|-1}$, for $a_i\in\{0,1\}$
and $0\leq i\leq |w|-1$, describes the permutation automaton
$$A_w=(\{0,1,\ldots,|w|-1\},\{a\},{}\cdot{},0,\{\,i\mid\mbox{$0\leq i<|w|-1$ and $a_i=1$}\,\})$$ with
$$
i\cdot a=
\begin{cases}
  i + 1 & \mbox{for $0\leq i<|w|-1$,}\\
  0 & \mbox{otherwise.}
\end{cases}
$$
It is clear that there is a bijection between words in $\{0,1\}^+$
with all unary \pfa s. Thus, we can identify words with \pfa s and
\textit{vice versa}.  Now we are ready for the investigation of the
accepting state complexity of certain operations on \pfa s.

\subsection{Complementation}

The complement of a language accepted by a finite automaton can be
obtained by simply interchanging accepting and non-accepting
states. Hence, the state complexity of a language accepted by a finite
automaton, regardless whether the automaton is a permutation automaton
or not, is the same. The result on the \emph{accepting state
  complexity} for unrestricted \dfa s was presented in
Example~\ref{exa:asc-for-complementation}. Next we show that this
result even holds for \pfa s.

\begin{thm}\label{thm:complement}
  We have
  $g^{\asc,u}_{C,\pfa}(m)=g^\asc_{C,\pfa}(m)=g^{\asc}_C(m)=g^{\asc,u}_C(m)$.
\end{thm}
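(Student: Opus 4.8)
The plan is to derive all four equalities from the single inclusion $g^{\asc}_C(m)\subseteq g^{\asc,u}_{C,\pfa}(m)$ together with easy membership facts. First I would record two structural observations: complementation leaves the transition function untouched and only swaps final and non-final states, so it maps permutation automata to permutation automata (and p-regular languages to p-regular languages) and never changes the underlying minimal \dfa\ nor its number of states. Since every unary \pfa\ is in particular a \pfa\ and a unary \dfa, and every \pfa\ is a \dfa, the definitions immediately give the chain
$$g^{\asc,u}_{C,\pfa}(m)\subseteq g^{\asc}_{C,\pfa}(m)\subseteq g^{\asc}_C(m)\quad\text{and}\quad g^{\asc,u}_{C,\pfa}(m)\subseteq g^{\asc,u}_C(m)\subseteq g^{\asc}_C(m).$$
Because Example~\ref{exa:asc-for-complementation} already supplies $g^{\asc,u}_C(m)=g^{\asc}_C(m)$, all four sets are squeezed between $g^{\asc,u}_{C,\pfa}(m)$ and $g^{\asc}_C(m)$; hence it suffices to prove the reverse inclusion $g^{\asc}_C(m)\subseteq g^{\asc,u}_{C,\pfa}(m)$, that is, every admissible value $\alpha$ can already be realised by a unary permutation automaton.

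For this reverse inclusion I would use the binary-string encoding of unary \pfa s introduced above. The key sub-facts are: a cyclic unary automaton $A_w$ is minimal exactly when $w$ is primitive (not a proper power), and in that case the a-minimality theorem of~\cite{Da16} yields $\asc(L(A_w))=\asc(A_w)=$ the number of accepting states of $A_w$; moreover complementation corresponds to flipping every bit of $w$, so the complement of $A_w$ is $A_{\overline w}$. Thus, to realise a pair $(m,\alpha)$ it suffices to exhibit a primitive word $w$ whose associated automaton has exactly $m$ accepting states and whose bitwise complement $\overline w$ is again primitive and induces exactly $\alpha$ accepting states.

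The concrete witnesses are then immediate. For $m\geq 1$ and $\alpha\geq 1$ I would take the word $1^m0^\alpha$: both $1^m0^\alpha$ and its complement $0^m1^\alpha$ are primitive, since a word having a single $1$-to-$0$ (resp.\ $0$-to-$1$) block boundary cannot be a proper power, and they carry exactly $m$ and $\alpha$ accepting states, respectively. The remaining values come from the boundary cases dictated by Example~\ref{exa:asc-for-complementation}: for $m=0$ the only language is $\emptyset$, whose complement $a^*$ forces $\alpha=1$ (witnessed by the word $0$ with complement $1$), and for $m=1$ the extra value $\alpha=0$ is realised by the word $1$ accepting $a^*$, whose complement is $\emptyset$. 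I expect the main obstacle to be bookkeeping rather than depth: one must verify that the chosen words are primitive, so that both $A_w$ and $A_{\overline w}$ are genuinely minimal, and that the constructed family matches the case distinction of Example~\ref{exa:asc-for-complementation} exactly---including the special treatment of $m\in\{0,1\}$---so that no value is missed and none is spuriously produced.
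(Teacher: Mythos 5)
Your proposal is correct, and since the paper omits the proof of this theorem (deferring it to the full version), it can only be compared with the paper's evident toolkit: your argument uses exactly that machinery, namely the reduction to the single inclusion $g^{\asc}_C(m)\subseteq g^{\asc,u}_{C,\pfa}(m)$ via Example~\ref{exa:asc-for-complementation} and the binary-string encoding $A_w$ with primitive witnesses $1^m0^\alpha$ (whose minimality is precisely the criterion of Lemma~\ref{lem:pfa-language}), plus the boundary cases $m\in\{0,1\}$. The only quibble is cosmetic: the paper's displayed definition of $A_w$ has an off-by-one in the accepting-state set ($0\leq i<|w|-1$ should be $\leq$), which slightly affects your $m=1$, $\alpha=0$ witness as literally written but not the substance of your argument.
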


\subsection{Kleene Star and Kleene Plus}

Next we study the accepting state complexity of the Kleene star and
the Kleene plus operations for permutation automata. We want to
mention that the Kleene closure of a p-regular language cannot be
accepted by a \pfa\ in many cases, see for example~\cite{HoMl20}.
First we prove a useful relation between \pfa s and the languages
which they accept.

\begin{lem}\label{lem:pfa-language}
  Let~$I$ be a finite set of non-negative integers and~$j$ be a number
  which is greater than the biggest number in~$I$. The
  language~$\bigcup\limits_{i\in I} a^{i}(a^{j})^*$ can be accepted by
  the \pfa
  \[A=(\{q_0,q_1,\ldots, q_{j-1}\},\{a\},{}\cdot_A{},q_0,\{\,q_i\mid
  i\in I\,\})\] with~$q_i\cdot_A a=q_{i+1\bmod j}$.  Additionally~$A$
  is minimal if there is no divisor~$t$ of~$j$ such that for
  every~$i\in I$ the number~$i+t\bmod j$ is in~$I$.
\end{lem}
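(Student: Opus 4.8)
The plan is to verify the lemma in two parts: first that the automaton $A$ accepts the claimed language, and second that the stated divisibility condition implies minimality.

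For the first part, I would unwind the definition of the transition function. Since $q_i\cdot_A a=q_{i+1\bmod j}$, a straightforward induction shows that $q_0\cdot_A a^n=q_{n\bmod j}$ for every $n\geq 0$. Hence a word $a^n$ is accepted exactly when $n\bmod j\in I$, that is, exactly when there is some $i\in I$ with $n\equiv i\pmod j$ and (using $i<j$, which holds because $j$ exceeds the largest element of $I$) $n\in\{i,i+j,i+2j,\ldots\}$. This is precisely the language $a^i(a^j)^*$, so ranging over all $i\in I$ gives $\bigcup_{i\in I}a^i(a^j)^*$ as desired. The fact that $A$ is a $\pfa$ is immediate since $i\mapsto i+1\bmod j$ is the cyclic permutation on $\{0,1,\ldots,j-1\}$.

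For the minimality part, all $j$ states are reachable (state $q_i$ is reached by $a^i$), so I only need to show that the contrapositive holds: if $A$ is \emph{not} minimal, then there is a divisor $t$ of $j$ with the stated property. The key observation is that on a unary cyclic automaton, two states $q_r$ and $q_s$ are equivalent iff for every $n\geq 0$ we have $r+n\bmod j\in I\iff s+n\bmod j\in I$; letting $d=s-r\bmod j$, this says the indicator set $I$ (viewed inside $\Z/j\Z$) is invariant under the shift by $d$. So $A$ fails to be minimal precisely when there is a nonzero $d\in\Z/j\Z$ under whose shift $I$ is invariant. The set of all such $d$ (including $0$) forms a subgroup of the cyclic group $\Z/j\Z$, hence is generated by some divisor $t$ of $j$; and the shift-invariance of $I$ under this generator $t$ is exactly the condition ``for every $i\in I$ the number $i+t\bmod j$ is in $I$.'' Thus non-minimality yields a proper divisor $t$ of $j$ (with $t<j$, i.e.\ $t$ a genuine period) with the claimed property, which is the contrapositive of what we want.

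The main obstacle I anticipate is making the subgroup argument clean: I must be careful that the set of shifts leaving $I$ invariant really is closed under addition modulo $j$ and contains $0$, so that it is a subgroup of $\Z/j\Z$ and therefore cyclic and generated by a divisor $t$ of $j$. A secondary subtlety is the direction of the implication — the lemma only claims a \emph{sufficient} condition for minimality, so I only need the contrapositive (non-minimality implies existence of such a $t$), and I should phrase the argument to establish exactly that, rather than the full equivalence, to avoid overclaiming. Everything else, namely reachability and the inductive computation of $q_0\cdot_A a^n$, is routine.
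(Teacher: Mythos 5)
Your proposal is correct, and it is worth noting that it proves the implication the lemma actually asserts, whereas the paper's own argument proves the converse. The paper assumes that a divisor $t$ with the stated property exists \emph{and} that $A$ is minimal, and derives a contradiction by observing that no word $a^k$ can separate $q_i$ from $q_{i+t\bmod j}$; this establishes that the condition is \emph{necessary} for minimality (``$A$ minimal $\Rightarrow$ no such $t$''). The lemma claims, and its later applications (e.g.\ in the Kleene star constructions) require, the \emph{sufficient} direction: ``no such $t$ $\Rightarrow$ $A$ minimal''. Your contrapositive argument delivers exactly that. The essential extra ingredient you supply is the stabilizer observation: a pair of distinct equivalent states only gives you \emph{some} nonzero shift $d$ with $I+d=I$ in $\Z/j\Z$, and $d$ need not divide $j$; passing to the subgroup of all stabilizing shifts, which is cyclic and generated by a proper divisor $t$ of $j$, is what upgrades $d$ to the divisor the hypothesis speaks about. (One detail to keep explicit when writing this up: $I+d\subseteq I$ already forces $I+d=I$ since the shift is a bijection of the finite cyclic group, so the set of stabilizing shifts really is a subgroup; and the trivial divisor $t=j$ always stabilizes $I$, so the lemma must be read over proper divisors --- your argument correctly produces $t<j$.) Both implications are in fact true, so the condition characterizes minimality given that all states are reachable, but your route is the one that matches the statement as written.
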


\begin{proof}
  The tedious details for the first statement are left to the reader.
  We prove the second statement by contradiction. So assume that there
  is a divisor~$t$ of~$j$ such that for every~$i\in I$ the
  number~$i+t\bmod j$ is in~$I$.  Assuming that~$A$ is minimal, for
  every pair of states there is a word~$w$ which distinguishes them,
  i.e., maps one of the states onto an accepting and the other one
  onto a non-accepting state. This includes the states~$q_i$
  and~$q_{i+t\bmod j}$.  Since~$w\in a^*$ we can assume~$w=a^k$ for a
  non-negative integer~$k$. Due to the definition of~$A$ the word~$w$
  maps the states~$q_i$ and~$q_{i+t\bmod j}$ onto~$q_{i+k\bmod j}$
  and~$q_{i+t+k\bmod j}$ which are either both in~$I$ or are both not
  in~$I$.  This contradicts the assumption that~$w$
  distinguishes~$q_i$ and~$q_{i+t\bmod j}$.
\end{proof}

We will use this result to prove that no magic numbers exist for the
accepting state complexity of the Kleene star operation.

\begin{thm}\label{thm:star}
  We have
	$$g_{*}^{\asc,u}(m)=g_{*,\pfa}^{\asc,u}(m)=
	\begin{cases}
          \{1\} & \mbox{if $m=0$,}\\
          \N & \mbox{otherwise.}
	\end{cases}
	$$
      \end{thm}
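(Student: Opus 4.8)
The plan is to prove the theorem in two parts. First I would establish the base case $m=0$: an automaton with zero accepting states accepts the empty language $\emptyset$, whose Kleene star is $\{\eps\}$. The minimal \dfa\ for $\{\eps\}$ has exactly one accepting state (the initial state, with a dead state for everything else), so $\asc(\emptyset^*)=1$, giving $g_{*}^{\asc,u}(0)=g_{*,\pfa}^{\asc,u}(0)=\{1\}$. Note that this also settles the \pfa\ case, since $\emptyset$ is accepted by the trivial one-state \pfa\ with no accepting state. The main content is the case $m\geq 1$, where I must show that for every target value $\alpha\in\N$ there exist witness languages realizing it, and that no value outside $\N$ (in particular, not $0$) is achievable.

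For the claim that $0$ is never achievable when $m\geq 1$, I would argue that $L^*$ always contains $\eps$, so $\asc(L^*)\geq 1$; hence $\alpha=0$ is excluded, which is exactly why the answer is $\N$ rather than $\N\cup\{0\}$. The heart of the proof is the positive direction: for each $\alpha\geq 1$, construct a unary \pfa\ $A$ with $\asc(A)=m$ (so $L(A)$ has accepting state complexity $m$) such that the minimal \dfa\ for $L(A)^*$ has exactly $\alpha$ accepting states. Here I would exploit the string encoding of unary \pfa s introduced just before Theorem~\ref{thm:complement}, together with Lemma~\ref{lem:pfa-language}, which tells me precisely when a \pfa\ of the form $\bigcup_{i\in I}a^i(a^j)^*$ is minimal (namely, when no nontrivial divisor $t$ of $j$ shifts $I$ into itself). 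The strategy is to pick the index set $I$ with $|I|=m$ and the cycle length $j$ cleverly so that (i) the \pfa\ is $m$-accepting and minimal, and (ii) taking the Kleene star of the resulting unary language produces a language whose minimal \dfa\ has the prescribed number $\alpha$ of accepting states.

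The key structural fact I would use is that the Kleene star of a unary language $L$ is governed by the numerical semigroup generated by the lengths of words in $L$: once the generated set of lengths is eventually all of $\N$ beyond some Frobenius-type threshold, $L^*$ is ultimately $a^*$ past a finite initial segment, and its minimal \dfa\ consists of a short tail followed by a self-loop. By tuning the smallest elements of $I$ and the period $j$, I can control both the length of the ``transient'' part and the eventual periodic behaviour, and thereby dial the accepting-state count of the minimal \dfa\ for $L^*$ up to any desired $\alpha$. Concretely, I expect to treat small $\alpha$ by direct constructions and large $\alpha$ by a scalable family; a natural approach is to make $L^*$ equal to a tail $a^k$ followed by $a^\ell(a^p)^*$-type periodic pieces whose finality pattern, via Lemma~\ref{lem:pfa-language}, yields exactly $\alpha$ accepting states in the minimal automaton.

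The hard part will be the simultaneous control of two competing constraints: the witness \pfa\ must have accepting state complexity exactly $m$ (forcing $|I|=m$ and minimality via the divisor condition of Lemma~\ref{lem:pfa-language}), while the Kleene star of its language must have accepting state complexity exactly $\alpha$. Since the $m$-constraint is on the \emph{input} automaton and the $\alpha$-constraint is on the \emph{output} of a nonlinear operation, decoupling them requires a careful choice of parameters; I would expect the delicate case to be reconciling a \emph{large} $m$ with a \emph{small} $\alpha$ (or vice versa), where the periodic structure forced by minimality of the input \pfa\ must nonetheless collapse to a short final segment after starring. Verifying minimality of the resulting \dfa\ for $L^*$ — checking reachability and pairwise inequivalence of its states, and pinning down exactly which states are accepting — is where the bulk of the routine-but-careful computation lies.
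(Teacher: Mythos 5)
Your overall strategy is the paper's strategy: handle $m=0$ via $\emptyset^*=\{\eps\}$, certify that the input \pfa\ has accepting state complexity~$m$ via Lemma~\ref{lem:pfa-language}, and choose the generators of a unary language so that its Kleene star becomes a finite tail followed by a one-state loop carrying exactly $\alpha$ accepting states. You also make explicit a point the paper leaves implicit, namely that $0$ is excluded for $m\geq 1$ because $\eps\in L^*$ forces $\asc(L^*)\geq 1$; that part is correct and complete.

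The gap is that the witnesses are never produced: the sentence asserting that you can ``tune the smallest elements of $I$ and the period $j$'' to ``dial the accepting-state count up to any desired $\alpha$'' is precisely the statement that needs proof, and you yourself flag the problematic regime (small $\alpha$ against large $m$, and vice versa) without resolving it. A complete argument must exhibit, for every pair $(m,\alpha)$ with $m,\alpha\geq 1$, a concrete index set $I$ of size $m$ and period $j$ satisfying the divisor condition of Lemma~\ref{lem:pfa-language}, compute the star, and verify minimality of the resulting \dfa. For comparison, the paper's choices are: for $\alpha=1$ take $L=\bigcup_{i=1}^{m}a^{i}(a^{m+1})^*$, whose star is $a^*$; for $\alpha>1$ and $m>1$ take $L=a^{2}(a^{2(\alpha-1)+m+1})^*\cup\bigcup_{i=1}^{m-1}a^{2(\alpha-1)+i}(a^{2(\alpha-1)+m+1})^*$, whose star is $\bigcup_{i=0}^{\alpha-2}a^{2i}\cup a^{2(\alpha-1)}a^*$ and is accepted by a minimal \dfa\ with a tail of length $2(\alpha-1)$, a one-state cycle, and exactly $\alpha$ accepting states (with $L=a^{2}(a^{2\alpha-1})^*$ covering $m=1$). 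These are exactly the ``tail plus self-loop'' targets you describe, so your plan is realizable and follows the paper's route, but as written it is an outline rather than a proof.
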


\begin{proof}
  For~$m=0$ we observe that~$\emptyset^*=\{\eps\}$, for~$\eps$
  being the empty word. So the first statement follows. For the second
  one we distinguish whether~$\alpha$ or~$m$ are equal to one. We
  distinguish four cases, where in each case we use
  Lemma~\ref{lem:pfa-language} to show that the defined language has
  accepting state complexity~$m$:
  \begin{enumerate}
  \item Case~$\alpha=1$ and~$m>1$: The
    language~$L=\bigcup\limits_{i=1}^m a^{i}(a^{m+1})^*$ has accepting
    state complexity~$m$ and its Kleene star is equal to~$\Sigma^*$
    which has accepting state complexity one.
			
  \item Case~$\alpha=1$ and~$m=1$: The language of the previous case
    can also be used if~$m=1$.
			
  \item Case~$\alpha>1$ and~$m>1$:
    Let~$L=a^2(a^{2(\alpha-1)+m+1})^*\cup \bigcup\limits_{i=1}^{m-1}
    a^{2(\alpha-1)+i}(a^{2(\alpha-1)+m+1})^*$. Then~$L^*$ is equal to
    $\bigcup\limits_{i=0}^{\alpha-2} a^{2i}\cup
    a^{2(\alpha-1)}\Sigma^*$. In turn~$\bigcup\limits_{i=0}^{\alpha-2}
    a^{2i}\cup a^{2(\alpha-1)}\Sigma^*$ can be accepted by a unary
    \dfa\ which has a tail of length~$2(\alpha-1)$ and a cycle formed
    by one state, where all states on positions with an even number
    are accepting if we start counting by zero.
			
  \item Case~$\alpha>1$ and~$m=1$: Define
    $L=a^2(a^{2\alpha-1})^*$. Then $L^*$ is equal to
    $\bigcup\limits_{i=0}^{\alpha-2} a^{2i}\cup
    a^{2(\alpha-1)}\Sigma^*$, which is the Kleene star language from
    the previous case.
  \end{enumerate}
  So in all cases the Kleene closure has accepting state
  complexity~$\alpha$ which completes the proof.
\end{proof}

By taking into account that for every language~$L$ the empty
word~$\eps$ is in~$L^+$ if and only if~$\eps\in L$, with a
small adjustment of the used languages for the previous theorem we
obtain the following corollary for the Kleene plus operation.

\begin{cor}\label{cor:plus}
  We have
  $$g_{+}^{\asc,u}(m)=g_{+,\pfa}^{\asc,u}(m)=
	\begin{cases}
          \{0\} & \mbox{if $m=0$,}\\
          \N & \mbox{otherwise.}
	\end{cases}
	$$
\end{cor}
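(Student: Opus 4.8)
The plan is to reduce everything to Theorem~\ref{thm:star} by the elementary observation already recorded above, namely that $\eps\in L^+$ if and only if $\eps\in L$, so that $L^+=L^*\setminus\{\eps\}$ whenever $\eps\notin L$. First I would dispose of $m=0$: the only language of accepting state complexity~$0$ is $\emptyset$, which is accepted by a one-state unary \pfa, and since $\emptyset^+=\emptyset$ has accepting state complexity~$0$, we get $g_{+}^{\asc,u}(0)=g_{+,\pfa}^{\asc,u}(0)=\{0\}$. For $m\geq 1$ note that $L\neq\emptyset$ forces $L^+\neq\emptyset$, hence every attainable value lies in~$\N$; thus it suffices to exhibit, for each target~$\alpha\in\N$, a unary \pfa\ witness, because the sandwich $g_{+,\pfa}^{\asc,u}(m)\subseteq g_{+}^{\asc,u}(m)\subseteq\N$ then yields both equalities at once.

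All witnesses will be the languages from the proof of Theorem~\ref{thm:star}; since each of them contains only words of length at least one, we have $\eps\notin L$ and therefore $L^+=L^*\setminus\{\eps\}$ throughout. For the target $\alpha=1$ the languages of Case~1 (for $m>1$) and Case~2 (for $m=1$) work unchanged: there $L^*=\Sigma^*$, so $L^+=\Sigma^*\setminus\{\eps\}$ is accepted by a two-state \dfa\ with one accepting state and hence has accepting state complexity~$1$, while $\asc(L)=m$ exactly as in Theorem~\ref{thm:star}.

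The promised \emph{small adjustment} is confined to $\alpha\geq 2$. Passing from $L^*$ to $L^+=L^*\setminus\{\eps\}$ only turns the initial state of the minimal \dfa\ for $L^*$, which is accepting because $\eps\in L^*$, into a non-accepting one; the automaton stays minimal, so the accepting state complexity drops by exactly one. To compensate I would run the construction of Case~3 (if $m>1$) or Case~4 (if $m=1$) with the parameter $\alpha+1$ in place of $\alpha$. This produces $L^*=\bigcup_{i=0}^{\alpha-1}a^{2i}\cup a^{2\alpha}\Sigma^*$ of accepting state complexity $\alpha+1$, whence $L^+=\bigcup_{i=1}^{\alpha-1}a^{2i}\cup a^{2\alpha}\Sigma^*$ has accepting state complexity exactly~$\alpha$; shifting the parameter leaves the computation showing $\asc(L)=m$ intact, so the witness still has the required accepting state complexity~$m$.

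The main, though routine, obstacle is the minimality bookkeeping. Using Lemma~\ref{lem:pfa-language} one must verify $\asc(L)=m$ for the parameter-shifted languages by checking the divisor condition on the residue set; and for $L^+$ one must confirm that deleting $\eps$ really keeps the \dfa\ minimal, so that the displayed accepting-state count is genuinely $\asc(L^+)$ rather than an overcount. Granting these verifications, the witnesses realize every $\alpha\in\N$, giving $g_{+}^{\asc,u}(m)=g_{+,\pfa}^{\asc,u}(m)=\N$ for all $m\geq 1$ and completing the proof.
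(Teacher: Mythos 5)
Your proof is correct and matches the paper's intent: the paper derives this corollary precisely by noting that $\eps\in L^+$ iff $\eps\in L$ and making "a small adjustment of the used languages" from Theorem~\ref{thm:star}, which is exactly what you do (keeping Cases~1 and~2, and shifting the parameter to $\alpha+1$ in Cases~3 and~4 to compensate for the initial state becoming non-accepting). The minimality checks you flag are of the same routine kind the paper itself leaves to the reader.
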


 \subsection{Union}

 In this subsection we extent the results for the accepting state
 complexity from~\cite{Da16} for the union operation to the class of
 permutation automata. For \dfa s in general the following result was
 shown in~\cite{Da16}:
$$
g^{\asc,u}_{\cup}(m,n)=g^{\asc}_{\cup}(m,n)=
\begin{cases}
  \{m\} & \mbox{if $n=0$,}\\
  \{n\} & \mbox{if $m=0$,}\\
  \N & \mbox{otherwise,}
\end{cases}
$$
and since the union operation is commutative
$g^{\asc,u}_{\cup}(m,n)=g^{\asc,u}_{\cup}(n,m)$ and
$g^{\asc}_{\cup}(m,n)=g^{\asc}_{\cup}(n,m)$. Note the languages that
prove these results are \emph{not} accepted by any \pfa.

We will prove that except for the special cases~$m=0$ or~$n=0$ every
accepting state complexity can be reached also for unary
alphabets. Therefore the reachable numbers coincide in the cases that
the input \dfa s are restricted or not. We split this into three
theorems which show that~$\N_{\leq \min\{n,m\}},$ $\N_{\geq
  \max\{n,m\}}$ and~$[\min\{n,m\}+1,\max\{n,m\}-1]$ are reachable.  We
start with the upper range.

\begin{thm}\label{thm:union-upper-range}
  For~$m,n\geq 1$ we have $\N_{\geq \max\{n,m\}}\subset
  g_{\cup,\pfa}^{asc,u}(m,n)$.
\end{thm}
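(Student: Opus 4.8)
The plan is to realize every target value $\alpha \geq \max\{m,n\}$ by a single parametrized construction built entirely from languages of the form covered by Lemma~\ref{lem:pfa-language}, exploiting the fact that the union of two such purely periodic unary languages is again of this form, with period the least common multiple of the two periods. By commutativity of union we may assume $m \le n$, so that $\max\{m,n\}=n$ and we must hit every $\alpha \ge n$.

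First I fix the period of the first automaton to $j_A = m+n$ and take $L(A)=\bigcup_{i\in I_A} a^i(a^{m+n})^*$ for the contiguous block $I_A=\{0,1,\dots,m-1\}$. A contiguous arc of length $m<m+n$ is not invariant under any nonzero shift modulo $m+n$, so by Lemma~\ref{lem:pfa-language} the automaton $A$ is minimal and $\asc(L(A))=m$. For the second automaton I use a period $j_B=c\,(m+n)$ that is an integer multiple of $j_A$, where the factor $c\ge1$ will be chosen depending on $\alpha$; then $\mathrm{lcm}(j_A,j_B)=j_B$, and viewed over one block $[0,j_B-1]$ the accepting positions of $L(A)$ form the lifted set $\hat I_A=\{\,i+t(m+n)\mid i\in I_A,\ 0\le t<c\,\}$ of size $mc$, while the accepting positions of $L(B)$ are a set $I_B\subseteq[0,j_B-1]$ of size $n$ that I am free to place. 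The union $L(A)\cup L(B)$ is then $\bigcup_{k\in J}a^k(a^{j_B})^*$ with $J=\hat I_A\cup I_B$, and by inclusion--exclusion its number of accepting positions is
$$|J| = mc + n - |\hat I_A\cap I_B|.$$
Since $|\hat I_A|=mc$ and the complement $[0,j_B-1]\setminus\hat I_A$ has size $cn\ge n$, the overlap $|\hat I_A\cap I_B|$ can be set to any value $d$ with $0\le d\le\min\{mc,n\}$ by placing $d$ elements of $I_B$ inside $\hat I_A$ and the remaining $n-d$ elements outside it. Hence $|J|$ can be made equal to any value in the interval $[\max\{mc,n\},\,mc+n]$.

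It remains to check that these intervals exhaust $\N_{\ge n}$ and that the relevant automata are minimal. As $c$ ranges over $c\ge1$, the interval for $c=1$ is $[n,m+n]$, and consecutive intervals meet because $\max\{m(c+1),n\}\le mc+n+1$ holds exactly when $m\le n+1$, which is guaranteed by $m\le n$; therefore $\bigcup_{c\ge1}[\max\{mc,n\},mc+n]=\N_{\ge n}$. So for a given $\alpha\ge n$ I select the corresponding $c$ and set $d=mc+n-\alpha$. The main obstacle is the simultaneous minimality bookkeeping: I must choose $I_B$ so that it has the prescribed overlap $d$ with $\hat I_A$ and so that, by the criterion of Lemma~\ref{lem:pfa-language}, neither $I_B$ nor $J$ is invariant under a shift by any proper divisor of $j_B$ (which would collapse $\asc(L(B))$ below $n$, respectively $\asc(L(A)\cup L(B))$ below $\alpha$). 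Because the period $j_B=c(m+n)$ is large relative to the prescribed cardinalities, there is ample freedom to perturb the placement of the ``outside'' elements of $I_B$ so as to break every hypothetical periodicity while preserving the counts; I expect this step to be routine but the only genuinely delicate one, and would discharge it by exhibiting, for each proper divisor $t$ of $j_B$, a witness position lying in one of the two sets but not in its $t$-shift.
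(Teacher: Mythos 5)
Your high-level plan coincides with the paper's: for each scaling factor $c\ge 1$ you realize the interval $[\max\{mc,n\},mc+n]$ (with $m\le n$), which is exactly the content of Lemma~\ref{lem:union-upper-range} up to swapping the roles of $m$ and $n$, and your observation that these intervals cover $\N_{\ge\max\{m,n\}}$ is Lemma~\ref{lem:union-upper-range-union}. The problem lies in the concrete construction: the minimality step you defer as ``routine perturbation'' is not merely delicate, it fails outright at the endpoints of your intervals, because there the freedom you invoke does not exist.

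Concretely, take $m=n$ and $\alpha=m+n$. For $c=1$ you are forced to $d=0$, so $J=\hat I_A\sqcup I_B$ is all of $[0,m+n-1]$; the union is $a^*$ and has accepting state complexity $1$, not $m+n$. The only other interval containing $\alpha=m+n$ is the one for $c=2$, where $d=\min\{2m,n\}=n$ forces $I_B\subseteq\hat I_A$, hence $J=\hat I_A=\{0,\dots,m-1\}\cup\{m+n,\dots,2m+n-1\}$ inside a cycle of length $2(m+n)$. This set is invariant under the shift by the proper divisor $m+n$, so by the criterion of Lemma~\ref{lem:pfa-language} the union automaton is not minimal and collapses to accepting state complexity $m$. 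Already $m=n=1$, $\alpha=2$ exhibits the failure, even though $\alpha=2$ is attainable there (e.g.\ by $(a^3)^*$ and $a(a^3)^*$). The root cause is the rigid choice $j_A=m+n$ and $j_B=c(m+n)$: whenever $d$ is forced to an extreme value ($d=0$ with $c=1$, or $d=n$ with $c\ge2$), there are no ``outside elements'' of $I_B$ left to perturb, and $J$ is either the full cycle or literally the $(m+n)$-periodic set $\hat I_A$. To repair the argument you must decouple the periods from the accepting-state counts, e.g.\ pad $A$ and $B$ with additional non-accepting positions (or choose $j_B$ not a multiple of $j_A$) so that $\hat I_A\cup I_B$ can never exhaust the cycle nor inherit the period of $A$; as it stands, the claimed inclusion is not established for all $\alpha\ge\max\{m,n\}$.
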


We split the proof of this theorem into two lemmata, which show the
reachability of smaller intervals (Lemma \ref{lem:union-upper-range})
and that the union of those intervals equals the whole range~$\N_{\geq
  \max\{n,m\}}$ (Lemma \ref{lem:union-upper-range-union}).

\begin{lem}\label{lem:union-upper-range}
  Let~$m\geq n\geq 1$, $i\geq 1$ and~$\alpha\in
  [\max\{in,m\},in+m]$. There are minimal unary \pfa s~$A$ and~$B$
  with accepting state complexity~$m$ and~$n$, \resp, such
  that~$L(A)\cup L(B)$ has accepting state complexity~$\alpha$.
\end{lem}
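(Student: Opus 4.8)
The plan is to realize both automata as unary cycles in the string encoding introduced above and to control the number of accepting states of the union through an inclusion--exclusion count. Since $m\geq n$, I let the smaller automaton~$B$ be a short cycle and the larger automaton~$A$ a cycle $i$~times as long. Concretely, fix a prime $q>in+m$ and, using Lemma~\ref{lem:pfa-language}, take $B$ to be the cycle of length~$q$ with accepting set $I_B=\{0,1,\ldots,n-1\}$ and $A$ to be the cycle of length~$iq$ with an accepting set $I_A\subseteq[0,iq-1]$ of size~$m$ still to be fixed. Because $q\mid iq$, a word~$a^k$ lies in $L(A)\cup L(B)$ exactly when $k\in I_A$ or $k\bmod q\in I_B$, so the union is again a cycle of length~$iq$ whose accepting set is $I_\cup=I_A\cup G_B$, where $G_B=\{\,k\in[0,iq-1]\mid k\bmod q\in I_B\,\}$ has $in$~elements. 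Inclusion--exclusion then gives $|I_\cup|=m+in-t$, where $t=|\{\,k\in I_A\mid k\bmod q\in I_B\,\}|$ is the number of accepting states of~$A$ placed in the $n$~residue classes singled out by~$B$.

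To sweep the interval I distribute the $m$~accepting states of~$A$ between the $in$ ``good'' positions (those counted by~$G_B$) and the remaining $i(q-n)$ ``bad'' positions. Putting $t$ of them on good positions and $m-t$ on bad ones yields a union with $m+in-t$ accepting states, and since $q$ is large the only constraint is $0\leq t\leq\min\{m,in\}$; as $t$ ranges over this set the count $m+in-t$ runs through exactly $[\max\{in,m\},\,in+m]$, the target interval. It remains to verify that each of the three automata is minimal, so that by the a-minimality result above its accepting state complexity equals the displayed number of accepting states.

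For minimality I appeal to the criterion in Lemma~\ref{lem:pfa-language}. If a cycle of length~$\ell$ with accepting set~$J$ is not minimal, then $J$ is invariant under the shift $k\mapsto k+\ell/d\bmod \ell$ for some divisor $d>1$ of~$\ell$, which forces $d\mid |J|$; as $q$ is a prime exceeding every cardinality in play, every such~$d$ must divide~$i$, so the offending shift is a multiple of~$q$ and merely permutes the $i$~blocks of length~$q$. Since $B$ fills every good residue class completely in the union, I break every nontrivial block permutation by reserving one bad residue class to hold exactly one accepting state of~$A$ (possible for every $t<m$); this single state is the unique accepting position in its class in both~$I_A$ and~$I_\cup$, so $A$ and the union are minimal, while $B$ is minimal because $q$ is prime and $0<n<q$. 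This pins down $\asc(L(A))=m$, $\asc(L(B))=n$, and $\asc(L(A)\cup L(B))=m+in-t=\alpha$.

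The delicate point, and the main obstacle, is the extremal value $t=\min\{m,in\}$, where one of the two sets is absorbed by the other. If $m\geq in$ then $G_B\subseteq I_A$ and $I_\cup=I_A$, which is harmless since $\asc=m=\max\{in,m\}$; but if $m\leq in$ then $I_A\subseteq G_B$ and $I_\cup=G_B$ collapses to period~$q$, so its accepting state complexity drops to~$n$ rather than the intended~$in$. Hence the left endpoint $\alpha=in$ (in the case $in\geq m$) cannot be obtained from this construction directly. I resolve it by realizing $\alpha=in$ with block factor $i-1$ in place of~$i$: since $m\geq n$, the value~$in$ lies in $[\max\{(i-1)n,m\},(i-1)n+m]$ and corresponds to the non-extremal choice $t=m-n$, so the union keeps period $(i-1)q$ and genuinely has $in$ accepting states (for $i=1$ the left endpoint is~$m$, already covered by the harmless collapse $I_\cup=I_A$). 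Carrying out this endpoint bookkeeping, together with the routine verification of the first statement of Lemma~\ref{lem:pfa-language}, completes the argument.
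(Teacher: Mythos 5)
Your construction is correct and complete. The paper defers the proof of this lemma to its full version, but its own remark before Theorem~\ref{thm:union-mid-range} (``the constructions for the previous lemmata created sequences $1^\alpha 0^\ell$'') shows that the intended witnesses yield a union automaton with a single consecutive block of accepting states, whereas you scatter the accepting states over residue classes modulo a large prime~$q$ and count them by inclusion--exclusion, sweeping $[\max\{in,m\},in+m]$ through the overlap parameter $t=|I_A\cap G_B|$. Your route buys a transparent count and one uniform minimality argument for all three automata: since $q$ is a prime exceeding $in+m$, any nontrivial rotation preserving an accepting set of size less than~$q$ must be a multiple of~$q$, and a single accepting state isolated in its residue class kills every such rotation. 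The genuinely delicate point---the absorbing endpoint $t=\min\{m,in\}$ when $m\leq in$, where $I_A\subseteq G_B$ forces $L(A)\subseteq L(B)$ and the union degenerates to $L(B)$ with only $n$ accepting states---you correctly identify and repair by re-deriving $\alpha=in$ from the interval with block factor $i-1$, where it corresponds to the non-extremal choice $t'=m-n$. One small correction: your ``harmless'' branch should be stated for $m>in$ rather than $m\geq in$, since at $m=in$ one has $I_A=G_B$, hence $L(A)=L(B)$ and $A$ itself fails to be minimal; but as your endpoint repair is invoked for all $m\leq in$, this boundary case is already covered and the argument stands.
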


The next lemma shows that the union of the intervals which are
reachable due to the previous lemma is again an interval.

\begin{lem}\label{lem:union-upper-range-union}
  For~$m\geq n$ holds $\bigcup_{i \in \N} [\max\{in,m\},in+m]=\N_{\geq
    m}$.
\end{lem}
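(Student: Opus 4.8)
The plan is to prove the two inclusions separately, the nontrivial one being that every integer $\alpha \geq m$ lies in at least one of the intervals. Throughout I would use that $m \geq n \geq 1$ (the assumption $n \geq 1$ coming from the surrounding setting of Lemma~\ref{lem:union-upper-range}), so that $i$ ranges over all positive integers and the right endpoints $in+m$ grow without bound. Write $a_i = \max\{in,m\}$ and $b_i = in+m$ for the endpoints of the $i$-th interval.

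For the inclusion $\bigcup_{i\in\N}[a_i,b_i] \subseteq \N_{\geq m}$ I would simply observe that each interval has left endpoint $a_i = \max\{in,m\} \geq m$, so every element of every interval is at least $m$; this direction is immediate.

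For the reverse inclusion I would argue that, ordered by increasing $i$, the intervals $[a_i,b_i]$ form a gap-free chain covering all integers from $m$ onward. Three facts suffice: (i) $a_1 = \max\{n,m\} = m$, so the chain starts exactly at $m$; (ii) $b_i = in+m \to \infty$ as $i\to\infty$, since $n\geq 1$; and (iii) $a_{i+1} \leq b_i + 1$ for every $i$, so no integer is skipped between consecutive intervals. Granting (i)--(iii), a routine induction on $\alpha$ places each $\alpha \geq m$ into some interval: the base case is $\alpha = m \in [a_1,b_1]$, and in the inductive step either $\alpha+1$ stays in the current interval or, by (iii), it falls at the start of the next one.

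The only genuine computation is establishing (iii), and here the $\max$ forces a short case distinction. If $(i+1)n \leq m$, then $a_{i+1} = m \leq in+m = b_i$, so (iii) holds trivially. If $(i+1)n > m$, then $a_{i+1} = (i+1)n = in+n$, and since $m \geq n$ yields $n \leq m+1$, I obtain $a_{i+1} = in+n \leq in+m+1 = b_i+1$, as required. I expect this case analysis to be the main (and essentially only) obstacle; once the inequality $n \leq m+1$ is isolated as the crux, everything else is bookkeeping.
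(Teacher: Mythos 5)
Your proof is correct: the forward inclusion is immediate from $\max\{in,m\}\geq m$, and your three facts (the first interval starts at $m$, the right endpoints are unbounded, and $a_{i+1}\leq b_i+1$) do chain the intervals together without gaps; the paper itself omits the proof of this lemma, but this is the canonical argument and surely the intended one. One tiny remark: in your second case $n\leq m$ already gives $a_{i+1}=in+n\leq in+m=b_i$, so consecutive intervals in fact overlap and the weaker bound $n\leq m+1$ is not needed.
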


It is not hard to see that the Lemmata~\ref{lem:union-upper-range}
and~\ref{lem:union-upper-range-union} are symmetric in~$n$ and~$m$ so
together they prove Theorem~\ref{thm:union-upper-range}. Next we show
that the lower range is reachable, too.

\begin{thm}
  \label{thm:union-lower-range}
  For~$m,n\geq 1$ we have $[1,\min\{n,m\}]\subset
  g_{\cup,\pfa}^{asc,u}(m,n)$.
\end{thm}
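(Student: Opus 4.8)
The plan is to realize every target $\alpha\in[1,\min\{n,m\}]$ by placing two unary \pfa s on a common cycle whose accepting patterns are individually aperiodic, but whose \emph{union} is periodic with a much shorter period, so that the minimal \dfa\ for the union collapses to exactly $\alpha$ accepting states. By commutativity of union we may assume $m\ge n$, so $\min\{n,m\}=n$ and we must reach each $\alpha\in[1,n]$. Fix a period $p>\alpha$ and take as target the union language $L_\cup=\bigcup_{i=0}^{\alpha-1}a^i(a^p)^*$. Its accepting pattern on a cycle of length $p$ is the block $1^\alpha0^{p-\alpha}$, which as a cyclic word has minimal period $p$ whenever $0<\alpha<p$; hence by Lemma~\ref{lem:pfa-language} $L_\cup$ is accepted by an a-minimal \pfa\ with exactly $\alpha$ accepting states, so $\asc(L_\cup)=\alpha$.

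The idea is then to obtain $L_\cup$ as $L(A)\cup L(B)$, where $A$ and $B$ share the \emph{longer} cycle length $N=pK$ for a suitable multiplier $K$, and their accepting sets $I_A,I_B\subseteq\{0,1,\ldots,N-1\}$ satisfy $I_A\cup I_B=S$, where $S=\{\,i:i\bmod p<\alpha\,\}$ is the accepting set of $L_\cup$ read modulo $N$, of size $|S|=\alpha K$. First I would establish that a workable $K$ exists: covering $S$ by an $m$-set and an $n$-set is possible exactly when $\max\{m,n\}\le|S|\le m+n$, i.e.\ $m\le\alpha K\le m+n$. The real interval $[m/\alpha,(m+n)/\alpha]$ has length $n/\alpha\ge1$ (since $\alpha\le n$) and therefore contains an integer $K$; fixing such a $K$ yields $|S|=\alpha K$ with $m\le|S|\le m+n$, so $S$ is the union of some $m$-set $I_A$ and some $n$-set $I_B$, the required overlap being $m+n-|S|\ge0$. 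With these accepting sets on the $N$-cycle the union automaton accepts precisely $\bigcup_{i\in S}a^i(a^N)^*=L_\cup$, whose minimal period is $p$ as computed above, giving $\asc(L(A)\cup L(B))=\alpha$.

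The main obstacle is the a-minimality of $A$ and $B$ themselves, which is what guarantees $\asc(L(A))=m$ and $\asc(L(B))=n$. By Lemma~\ref{lem:pfa-language} this reduces to choosing $I_A$ and $I_B$ so that neither is invariant under translation by any proper divisor $t$ of $N=pK$. Here the tension is clear: $S$ \emph{is} invariant under $+p$ (that is exactly why the union collapses), so the subsets $I_A,I_B$ must be selected to break this symmetry and every other divisor symmetry simultaneously. Since $|S|=\alpha K$ with $K\ge2$ in all non-degenerate cases, there is ample room to place the elements asymmetrically—e.g.\ by leaving an irregular gap pattern that no nontrivial shift preserves—and I would verify the non-invariance divisor by divisor. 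The genuinely delicate situations are those in which the counting above pins $K$ to a single small value (for instance $\alpha=n$, forcing $K=\lceil m/n\rceil$), and these I would dispatch by writing down an explicit choice of $I_A$ and $I_B$ and checking the finitely many divisors of $N$ directly.

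Finally, the entire construction is symmetric in the roles of $m$ and $n$, matching the symmetric form of the statement, so reaching every $\alpha\in[1,n]$ establishes $[1,\min\{n,m\}]\subset g_{\cup,\pfa}^{asc,u}(m,n)$ and completes the proof.
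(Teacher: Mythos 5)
Your overall strategy---realizing the union as the periodic pattern $(1^\alpha 0^{p-\alpha})^K$ on a common cycle of length $N=pK$ and splitting its support $S$ into an $m$-set and an $n$-set---is sound and matches the spirit of the paper's construction (the paper's remark following the theorem confirms that its witnesses likewise produce union patterns of the form $1^\alpha 0^\ell$). The counting argument for the existence of an integer $K$ with $m\le \alpha K\le m+n$ is correct, as is the computation that the quotient of the union automaton is a $p$-cycle with exactly $\alpha$ accepting states.

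However, there is a genuine gap exactly where you write ``I would verify'': you never exhibit $I_A$ and $I_B$, and the minimality of $A$ and $B$ (which is what guarantees $\asc(L(A))=m$ and $\asc(L(B))=n$ via the a-minimality theorem and Lemma~\ref{lem:pfa-language}) is the entire content of the statement. This is not a routine verification that can be deferred, because for some admissible $K$ the construction provably fails. Take $\alpha=1$, $m=2$, $n=1$: your interval gives $K\in\{2,3\}$, and for $K=2$ one is forced to take $I_A=S=\{0,p\}$, which is invariant under the shift by the proper divisor $p$ of $N=2p$; hence $A$ is not minimal and $\asc(L(A))=1\neq m$. So ``fixing such a $K$'' is not enough: the choice of $K$, of $p$, and of the covering must be made jointly, and the non-invariance of both $I_A$ and $I_B$ must be checked against every proper divisor of $pK$ (choosing $p$ to be a suitable large prime trivializes many of these checks, but that choice and the check must appear in the proof). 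Until an explicit choice of $p$, $K$, $I_A$, $I_B$ is written down and verified, your argument only shows that the target language $L_\cup$ has accepting state complexity $\alpha$, not that it arises as a union of p-regular languages of accepting state complexities $m$ and $n$.
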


The constructions for the previous lemmata created
sequences~$1^\alpha0^\ell$ for some~$\ell\in\N$. There are values
for~$n,$ $m$ and~$\alpha$ such that~$\alpha$ cannot be reached by this
method. Therefore we have to create sequences which contain~$\alpha$
accepting and distinguishable states which are not consecutive.  We
want to mention here that we count the positions of the states in an
unary \pfa\ in the same way we count the positions of the letters
describing the \pfa, namely we start by zero.

\begin{thm}
  \label{thm:union-mid-range}
  For~$m,n\geq 1$ we have $[\min\{n,m\}+1,\max\{n,m\}-1]\subset
  g_{\cup,\pfa}^{asc,u}(m,n)$.
\end{thm}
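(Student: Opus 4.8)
The plan is to reduce to the case $m\geq n$, which is legitimate since union is commutative and the stated interval is symmetric in $m$ and $n$; note also that $[\min\{n,m\}+1,\max\{n,m\}-1]$ is nonempty only when $m\geq n+2$, so I may fix an integer $\alpha$ with $n<\alpha<m$. The guiding observation is that the upper- and lower-range constructions produced a union whose minimal unary \pfa\ had the same period as one of the inputs, so its pattern was a consecutive block $1^{\alpha}0^{\ell}$; such a union can never have fewer accepting states than either input and hence never lands strictly below $m$. To reach the middle range I must instead force $L(A)\cup L(B)$ to have a \emph{strictly smaller} minimal period than $A$. Since, by Lemma~\ref{lem:pfa-language}, the accepting state complexity of a unary a-minimal \pfa\ equals the number of ones read over one minimal period, shrinking the period is exactly what drops the accepting count from $m$ to a value $\alpha<m$; the surviving ones are then necessarily spread out rather than consecutive, as anticipated.

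First I would fix the intended union pattern: a primitive word $u\in\{0,1\}^{P}$ of some length $P>\alpha$ containing exactly $\alpha$ ones, and let $U$ be the language of the \pfa\ $A_u$, so that $\asc(U)=\alpha$ by Lemma~\ref{lem:pfa-language}. The automaton $B$ is then chosen as a \emph{filler}: a unary \pfa\ of period $q$ dividing $P$ with exactly $n$ accepting states whose positions, read modulo $P$, all fall on ones of $u$; this guarantees $L(B)\subseteq U$ and $\asc(B)=n$. Finally $A$ is taken with period $p=cP$ for a suitable multiplier $c$ and accepting set $I_A$ obeying $U\setminus L(B)\subseteq I_A\subseteq U$ with $|I_A|=m$, where the elements of $I_A$ beyond the forced set $U\setminus L(B)$ are placed so as to destroy every proper sub-period of $p$. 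By construction $L(A)\cup L(B)=U$, so its accepting state complexity is $\alpha$.

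The verification is three applications of the minimality criterion in Lemma~\ref{lem:pfa-language}: that no divisor of $p$ stabilises $I_A$ (so $A$ is minimal and $\asc(A)=m$), that no divisor of $q$ stabilises $I_B$ (so $\asc(B)=n$), and that $u$ is primitive (so the union is a-minimal with $\asc=\alpha$). What remains is to certify that the parameters $P$, $q$, $c$ and the placement actually exist for \emph{every} $\alpha$ in the range. The binding inequalities, obtained by counting ones over one period of $A$, are $c(\alpha-nP/q)\leq m\leq c\alpha$: the left side bounds the forced part $U\setminus L(B)$ and the right side is the total number of ones of $U$ available to $A$. Taking the filler as dense as $L(B)\subseteq U$ permits makes $\alpha-nP/q$ small, after which a suitable $c$ places $m$ inside the admissible window; a short case distinction according to the residue $\alpha\bmod n$ handles both the dense and the sparse filler situations.

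The hard part will be reconciling the two opposing demands on $I_A$: minimality of $A$ forces $I_A$ to be irregular enough to have no proper sub-period of $p$, whereas the collapse of the union to period $P$ forces $I_A$ to agree with the rigid periodic set $U\setminus L(B)$ on all but a few positions, and these few free positions must simultaneously repair the cardinality to exactly $m$ and break every sub-period. This is most delicate at the top of the range, $\alpha=m-1$, where only a very gentle period reduction is allowed and hence almost all of $I_A$ is pinned down; making the window $[c(\alpha-nP/q),c\alpha]$ contain $m$ there, while keeping enough free positions to guarantee primitivity, is where the careful choice of $P$, $q$, and $c$ together with the case analysis on $\alpha\bmod n$ is really needed.
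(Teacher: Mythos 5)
Your overall strategy is the right one and is consistent with what the paper announces just before Theorem~\ref{thm:union-mid-range}: to reach an $\alpha$ strictly between $n$ and $m$ the union must collapse to a \pfa\ whose minimal period $P$ properly divides the period of $A$, so that the $\alpha$ surviving accepting states are no longer consecutive. The reduction to $m\geq n+2$, the use of Lemma~\ref{lem:pfa-language} for a-minimality, and the counting identity $|I_A|=c(\alpha-nP/q)+|E|$ with the free set $E$ drawn from $L(B)$ are all sound. The gap is that the existence argument---which is the actual content of the theorem---is deferred rather than given. Moreover, the one quantitative condition you do state, $c(\alpha-nP/q)\leq m\leq c\alpha$, is necessary but not sufficient: since $m>\alpha$ forces $c\geq 2$, both endpoints are degenerate, because $m=c\alpha$ forces $I_A$ to be all of $U$ lifted to $\mathbb{Z}_{cP}$ and $m=c(\alpha-nP/q)$ forces $I_A=U\setminus L(B)$, and in either case $I_A$ is invariant under the shift by $P$, a proper divisor of $cP$, so $A$ fails the minimality criterion of Lemma~\ref{lem:pfa-language} and $\asc(L(A))<m$. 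You need the strict inequalities, and even then you must show that the $m-c(\alpha-nP/q)$ free positions can be distributed over the $c$ copies of the $nP/q$ positions of $L(B)$ so as to kill \emph{every} proper divisor $t$ of $cP$ as a period: the divisors with $P\mid t$ are handled by a non-uniform distribution over the copies, but those with $P\nmid t$ require an argument on the projection of $I_A$ to $\mathbb{Z}_P$ (some level set of the multiplicity function must fail to be $\gcd(t,P)$-invariant), and this interacts with the primitivity of $u$, with which residues of $L(B)$ the set $E$ hits, and with the minimality of $B$ itself.

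Concretely, the assertion that ``a suitable $c$ places $m$ inside the admissible window'' after ``a short case distinction according to the residue $\alpha\bmod n$'' is exactly what needs proof, and it is not routine. For instance, for $m=2\alpha$ and $P/q=1$ the only admissible $c$ can be $c=2$, which sits on the degenerate endpoint $m=c\alpha$; one is then forced to a larger $c$ or a larger $P/q$, after which one must re-verify $nP/q<\alpha$ (note: strict, since $nP/q=\alpha$ would make the ones of $u$ a $q$-periodic set and destroy the primitivity of $u$, collapsing $\asc(U)$ to $n$), the existence of a primitive $u$ whose ones contain the required $q$-periodic sub-pattern, and the aperiodicity of $I_B$. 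None of this appears to be fatal---the scheme goes through on small instances---but as written the proposal is a construction template together with a list of the obligations it incurs, not a demonstration that those obligations can be discharged for every $\alpha$ in $[\,n+1,m-1\,]$.
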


Obviously~$K\cup \emptyset=K$ and~$\emptyset\cup L=L$ for every
languages~$K,L\subseteq \Sigma^*$.  Together with the
Theorems~\ref{thm:union-upper-range},~\ref{thm:union-lower-range},
and~\ref{thm:union-mid-range} we obtain the following corollary.

\begin{cor}\label{cor:union}
  We have
	$$
        g_{\cup}^{asc,u}(m,n)=g_{\cup,\pfa}^{asc,u}(m,n)=
        \begin{cases}
          \{n\} & \mbox{if $n= 0$,}\\
          \{m\} & \mbox{if $m= 0$,}\\
          \N & \mbox{otherwise.}
        \end{cases}
		$$
              \end{cor}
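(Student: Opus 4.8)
The plan is to obtain the corollary purely by assembling the three preceding theorems and disposing of the degenerate cases, since all the genuine constructions have already been carried out. First I would settle the cases $n=0$ and $m=0$. Here I use that a minimal \dfa\ realises accepting state complexity~$0$ exactly when it accepts the empty language; thus $n=0$ forces $L(B)=\emptyset$, whence $L(A)\cup L(B)=L(A)$ and $\asc(L(A)\cup L(B))=\asc(L(A))=m$, giving $g_{\cup,\pfa}^{asc,u}(m,0)=\{m\}$. The symmetric argument yields $g_{\cup,\pfa}^{asc,u}(0,n)=\{n\}$. Note that the empty language is accepted by a one-state \pfa\ whose single cycle state is non-accepting, so restricting to \pfa\ witnesses costs nothing in these cases.

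Next, for $m,n\geq 1$ I would prove $\N\subseteq g_{\cup,\pfa}^{asc,u}(m,n)$ by taking the union of the three ranges supplied by Theorems~\ref{thm:union-upper-range}, \ref{thm:union-lower-range}, and~\ref{thm:union-mid-range}, namely $\N_{\geq\max\{n,m\}}$, $[1,\min\{n,m\}]$, and $[\min\{n,m\}+1,\max\{n,m\}-1]$. The two bounded intervals merge into $[1,\max\{n,m\}-1]$, and adjoining $\N_{\geq\max\{n,m\}}$ produces all of~$\N$. Because the three theorems are phrased symmetrically in~$m$ and~$n$ through $\min$ and $\max$, no separate appeal to the commutativity of union is needed here.

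For the matching upper bound and the identification with the unrestricted quantity, I would argue as follows. Every \pfa\ is in particular a \dfa, so any value realised with unary \pfa\ witnesses is also realised with unary \dfa\ witnesses; hence $g_{\cup,\pfa}^{asc,u}(m,n)\subseteq g_{\cup}^{asc,u}(m,n)$. The value $g_{\cup}^{asc,u}(m,n)=\N$ for $m,n\geq 1$ is exactly the result of~\cite{Da16} recalled at the opening of this subsection. Chaining $\N\subseteq g_{\cup,\pfa}^{asc,u}(m,n)\subseteq g_{\cup}^{asc,u}(m,n)=\N$ forces all three sets to equal~$\N$, establishing the claimed equality $g_{\cup}^{asc,u}(m,n)=g_{\cup,\pfa}^{asc,u}(m,n)$ simultaneously. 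It is worth recording that $0$ really is excluded for $m,n\geq 1$: accepting state complexity at least one means each witness language is non-empty, so their union is non-empty and has positive accepting state complexity, consistent with the answer being~$\N$ rather than $\N\cup\{0\}$.

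The corollary carries no real difficulty of its own---the entire combinatorial weight sits in the three theorems it cites---so the only point demanding care is the interval bookkeeping: I must check that the three ranges overlap or abut with no gap in the boundary cases $m=n$ and $|m-n|=1$, where the middle interval $[\min\{n,m\}+1,\max\{n,m\}-1]$ degenerates to the empty set. A quick inspection shows that in both degenerate situations the lower interval $[1,\min\{n,m\}]$ together with $\N_{\geq\max\{n,m\}}$ already cover~$\N$, so the conclusion stands. I expect this edge-case verification to be the least routine part of an otherwise purely organisational proof.
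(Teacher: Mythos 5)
Your proposal is correct and follows essentially the same route as the paper: the degenerate cases $m=0$ or $n=0$ are handled by $K\cup\emptyset=K$ and $\emptyset\cup L=L$, and for $m,n\geq 1$ the three ranges from Theorems~\ref{thm:union-upper-range}, \ref{thm:union-lower-range}, and~\ref{thm:union-mid-range} are assembled to cover~$\N$, with the unrestricted value taken from~\cite{Da16}. Your explicit check that the intervals still cover~$\N$ when the middle interval degenerates is a sound (if routine) addition that the paper leaves implicit.
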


              \subsection{Difference}

              Now let us come to the difference operation which was
              also considered in~\cite{Da16}.  For deterministic
              finite automata with no restrictions the following
              result was shown:
$$
g^{\asc,u}_{\setminus}(m,n)=g^{\asc}_{\setminus}(m,n)=
\begin{cases}
  \{m\} & \mbox{if $n=0$}\\
  \{0\} & \mbox{if $m=0$}\\
  \{0\}\cup\N & \mbox{otherwise,}
\end{cases}
$$
Again the languages that prove these results are \emph{not} accepted
by any \pfa.

This subsection is structured as follows. First we will use the fact
that~$K\setminus L=K\cap C(L)$ for all finite languages~$K$ and~$L$ to
show that all numbers in the range~$[0,m]$ are reachable and for
all~$n,m\geq 1$ with~$\alpha\bmod m=0$ the numbers~$\alpha$ are
reachable as well. The previously mentioned fact for the set
difference of two languages allows us to prove the first two
statements by constructing~$A$ and~$B$ such that the minimal \dfa\
accepting the language of the direct product of~$A$ and~$\overline{B}$
has the required size, for~$\overline{B}$ being equal to~$B$ except
that its set of accepting states is complemented.  Afterwards we prove
that for all~$n,m\geq 1$ with~$\alpha\bmod m>0$ the numbers~$\alpha$
are reachable, too.

\begin{lem}\label{lem:diff-lower-range}
  For~$m,n\geq 1$ we have $[0,m]\subset
  g_{\setminus,\pfa}^{asc,u}(m,n)$.
\end{lem}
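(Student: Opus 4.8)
The plan is to prove $[0,m]\subset g_{\setminus,\pfa}^{\asc,u}(m,n)$ for $m,n\geq 1$ by exploiting the identity $K\setminus L=K\cap C(L)$ together with the observation from the preceding subsection that the set difference realized on the direct product of a \pfa~$A$ and the complemented automaton~$\overline B$ is again a \pfa. Concretely, for a target value $\alpha\in[0,m]$ I would construct two unary \pfa s~$A$ and~$B$, encoded as binary strings in the sense of the notation introduced at the start of Section~\ref{sec:results}, so that~$A$ has accepting state complexity~$m$, $B$ has accepting state complexity~$n$, and the product automaton accepting $L(A)\setminus L(B)=L(A)\cap C(L(B))$ has exactly~$\alpha$ accepting states in its minimal form.

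First I would fix the cycle lengths of~$A$ and~$B$. The natural choice is to take coprime (or at least suitably chosen) cycle lengths~$p$ and~$q$, so that the direct product is itself a single cycle of length~$\mathrm{lcm}(p,q)$, i.e.\ again a unary \pfa. Using the string encoding, I would place the~$m$ accepting positions of~$A$ and the~$q-n$ accepting positions of~$\overline B$ (equivalently, fix the~$n$ accepting positions of~$B$) so that, when the two cycles are combined by the Chinese-remainder correspondence $r\mapsto(r\bmod p,\;r\bmod q)$, precisely~$\alpha$ residues land in the accepting-times-accepting region. The accepting states of the product are exactly those positions that are accepting in~$A$ \emph{and} non-accepting in~$B$, and since~$\alpha\leq m$, there is enough room among the~$m$ accepting positions of~$A$ to select~$\alpha$ of them that avoid the accepting positions of~$B$ while the remaining $m-\alpha$ coincide with accepting positions of~$B$ and are thereby killed by the complement.

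The key step is to apply Lemma~\ref{lem:pfa-language} to guarantee minimality of the product automaton and to read off that its accepting state complexity is the number~$\alpha$ of surviving accepting positions, and likewise to certify $\asc(A)=m$ and $\asc(B)=n$. By Theorem~1 (a-minimality), once~$A$ and~$B$ are shown minimal the value $\asc(L(A))=m$ and $\asc(L(B))=n$ follows, and the same lemma applied to the product language $\bigcup_{i\in J}a^i(a^{\mathrm{lcm}(p,q)})^*$ with~$|J|=\alpha$ gives $\asc(L(A)\setminus L(B))=\alpha$ provided no divisor~$t$ of the cycle length shifts the accepting index set~$J$ into itself. Handling the boundary value $\alpha=0$ is immediate: choose the accepting positions of~$A$ to be a subset of those of~$B$ so that every accepting state of~$A$ is removed, yielding the empty language of accepting state complexity~$0$.

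The main obstacle I anticipate is the simultaneous control of three minimality conditions through the single divisibility criterion of Lemma~\ref{lem:pfa-language}: I must choose the cycle lengths and the accepting index sets so that~$A$ is minimal, $B$ is minimal, \emph{and} the combined index set~$J$ of the product admits no nontrivial period~$t$ dividing $\mathrm{lcm}(p,q)$. Picking distinct prime cycle lengths and clustering the accepting positions (e.g.\ choosing~$J$ as a block of consecutive residues so that any shift detectably leaves the block) should break all spurious periodicities at once; verifying this non-periodicity for every~$\alpha$ in the range, including the extreme cases $\alpha=m$ and small~$n$, is the one spot where the routine check genuinely has to be carried out.
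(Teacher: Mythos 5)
Your overall strategy---realizing $L(A)\setminus L(B)$ as $L(A)\cap C(L(B))$ on the product of two unary cycles and arranging the accepting positions so that exactly $\alpha$ of the $m$ accepting states of~$A$ survive---is the same as the paper's, but the concrete mechanism you propose fails at the decisive point. If the cycle lengths~$p$ of~$A$ and~$q$ of~$B$ are coprime, the Chinese-remainder bijection forces the product cycle of length~$pq$ to have exactly $m\cdot(q-n)$ accepting positions: every accepting residue of~$A$ meets \emph{every} residue of~$B$ somewhere along the product cycle, so no individual accepting state of~$A$ can be made to ``coincide with an accepting position of~$B$'' and be killed while another one survives. If you then, as you propose, break all periodicities so that the product cycle is already minimal, the only attainable values are $\alpha=0$ (which needs $F_B=Q_B$ and thus destroys minimality of~$B$ for $n\geq 2$) and multiples of~$m$; the intermediate values $0<\alpha<m$ are out of reach. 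Reaching them via a coprime product would instead require an engineered collapse of $m(q-n)$ accepting states down to~$\alpha$ under minimization, which is exactly the periodicity you set out to exclude.

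The selective-killing picture you describe requires the opposite of coprimality. The paper takes $B=A_{1^n0}$ of cycle length~$n+1$ and $A=A_w$ with $w=(0^n1)^\alpha(10^n)^{m-\alpha}0^{n+1}$ of cycle length $(n+1)(m+1)$, so that the length of~$B$ \emph{divides} that of~$A$; the product then collapses onto the cycle of~$A$, and each accepting state of~$A$ pairs with the single residue of~$B$ determined by its position modulo $n+1$. Placing $\alpha$ accepting states at residue~$n$ (non-accepting in~$B$) and $m-\alpha$ at residue~$0$ (accepting in~$B$) yields $w''=(0^n1)^\alpha0^{(n+1)(m-\alpha+1)}$ with exactly~$\alpha$ accepting states, while the trailing block of zeros breaks any spurious period. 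Your proposal also never exhibits explicit witnesses, but the choice of cycle lengths is the substantive gap.
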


\begin{proof}
  Let~$A=A_w$ and~$B=A_{w'}$
  for~$w=(0^n1)^\alpha(10^n)^{m-\alpha}0^{n+1}$ and~$w'=1^n0^1$. That
  means~$L(A)\setminus L(B)$ is accepted by~$A_{w''}$
  for~$w''=(0^n1)^\alpha0^{(n+1)\cdot (m-\alpha+1)}$. We leave it to
  the reader to observe that the three involved automata are minimal.
  The \pfa~$A_{w''}$ has accepting state complexity~$\alpha$ which
  proves the statement of the lemma.
\end{proof}

The next lemma shows that the every number~$\alpha$ in the upper
range~$\N_{\geq m+1}$ is obtainable if~$\alpha$ fulfills~$\alpha\bmod
m=0$.

\begin{lem}\label{lem:diff-upper-range-less}
  For~$m,n\geq 1$ and~$\alpha\in\N_{\geq m+1}$ with~$\alpha\bmod m=0$
  we have~$\alpha\in g_{\setminus,\pfa}^{asc,u}(m,n).$
\end{lem}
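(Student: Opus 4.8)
The plan is to use the identity $K\setminus L=K\cap C(L)$ and to realise the difference as a single cyclic \pfa\ by making the two witnesses have \emph{coprime} cycle lengths, so that the Chinese remainder theorem governs the accepting pattern of the product. Since $m\mid\alpha$ and $\alpha\geq m+1$, write $\alpha=km$ with $k\geq 2$. The quantitative heart of the argument is the following observation: if $A$ is a cyclic \pfa\ of length $p$ with accepting set $I_A\subseteq\Z_p$ of size $m$, and $B$ is a cyclic \pfa\ of length $q$ with accepting set $I_B\subseteq\Z_q$ of size $n$, and $\gcd(p,q)=1$, then the product automaton accepting $L(A)\cap C(L(B))$ is a single cycle of length $pq$, and by the Chinese remainder theorem its accepting states correspond bijectively to $I_A\times(\Z_q\setminus I_B)$. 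Hence it has exactly $m\,(q-n)$ accepting states, and choosing $q-n=k$ produces $\alpha$.

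Concretely, I would set $B=A_{v}$ with $v=1^{n}0^{k}$, the cyclic \pfa\ of length $q=n+k$ with $I_B=[0,n-1]$ and exactly $k$ non-accepting states, and $A=A_{u}$ with $u=1^{m}0^{\,p-m}$, where $p$ is chosen to be a prime larger than $\max\{m,q\}$; primality makes $\gcd(p,q)=1$ automatic and gives $I_A=[0,m-1]$. Minimality of $A$ and $B$ follows from Lemma~\ref{lem:pfa-language}: a proper nonempty cyclic interval in $\Z_p$ or $\Z_q$ is invariant under no nonzero shift, so no proper divisor of $p$ fixes $I_A$ and no proper divisor of $q$ fixes $I_B$. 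Thus $A$ and $B$ are minimal \dfa s with $m$ and $n$ accepting states, and by the a-minimality theorem stated above $\asc(L(A))=m$ and $\asc(L(B))=n$. The product automaton for $L(A)\setminus L(B)=L(A)\cap C(L(B))$ is then the cyclic \pfa\ of length $pq$ with accepting set $I_C=\{\,t\in[0,pq-1]\mid t\bmod p\in[0,m-1]\text{ and }t\bmod q\in[n,q-1]\,\}$, and the Chinese remainder count gives $|I_C|=m\,(q-n)=mk=\alpha$.

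The main obstacle is to prove that this length-$pq$ product automaton is itself minimal, so that its accepting-state count is genuinely the accepting state complexity of $L(A)\setminus L(B)$. I would argue via the isomorphism $\Z_{pq}\cong\Z_p\times\Z_q$: a shift of $\Z_{pq}$ by $t$ corresponds to the simultaneous shift by $(t\bmod p,\,t\bmod q)$, under which $I_C$ corresponds to the product set $[0,m-1]\times[n,q-1]$. A nonempty product set is fixed by a shift precisely when each factor is fixed in its own ring; since $[0,m-1]$ and $[n,q-1]$ are proper nonempty cyclic intervals, each is fixed only by the zero shift, forcing $t\equiv0\pmod p$ and $t\equiv0\pmod q$, hence $t\equiv0\pmod{pq}$. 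Therefore no proper divisor of $pq$ preserves $I_C$, and Lemma~\ref{lem:pfa-language} shows the product is minimal. The a-minimality theorem then yields $\asc(L(A)\setminus L(B))=|I_C|=\alpha$. I expect the only delicate point to be the interval-symmetry fact underpinning every minimality check; the remaining steps are a routine Chinese remainder count.
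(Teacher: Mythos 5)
Your construction is correct and is essentially the paper's own proof: both realise $L(A)\setminus L(B)$ as $L(A)\cap C(L(B))$ using a pair of cyclic \pfa s with coprime cycle lengths, so that the cross product is a single cycle whose accepting states are counted via the Chinese remainder theorem as $m\cdot(\alpha/m)=\alpha$. The only differences are cosmetic --- you force coprimality by padding $A$ to a prime length where the paper takes the smallest $k$ with $\gcd(m+k,x+n)=1$, and you spell out the shift-invariance argument for minimality of the product, which the paper leaves to the reader.
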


\begin{proof}
  Let~$\alpha=mx+(\alpha\bmod m)$ and we set~$A=A_w$ and~$B=A_{w'}$
  for~$w=1^m0^k$ and~$w'=0^x1^n$, where~$k$ is the smallest positive
  integer such that $\gcd(m+k,x+n)=1$. It is not hard to see that~$A$
  and~$B$ are minimal and that the cross product \dfa\ of~$A$
  and~$\overline{B}=A_{w''}$, for~$w''=1^x0^n$, has~$xm$ accepting
  states and is minimal, too.
\end{proof}

The next lemma shows that the upper interval~$\N_{\geq m+1}$ is also
attainable for~$\alpha\bmod m\neq 0$ which clearly proves the range to
be reachable for all numbers in the interval.

\begin{lem}\label{lem:diff-upper-range-n-1}
  For~$m,n\geq 1$ and~$\alpha\in\N_{\geq m+1}$ with~$\alpha\bmod
  m\neq0$ we have~$\alpha\in g_{\setminus,\pfa}^{asc,u}(m,n).$
\end{lem}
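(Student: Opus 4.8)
The plan is to realize the set difference through the identity $L(A)\setminus L(B)=L(A)\cap C(L(B))$ and to read off the accepting states of the cross product of $A$ and $\overline{B}$. Writing $p$ and $q$ for the cycle lengths of $A$ and $B$ and $d=\gcd(p,q)$, the reachable states of this product form a single cycle of length $\operatorname{lcm}(p,q)=pq/d$, and by a Chinese-remainder argument a reachable state lying in residue class $c$ modulo $d$ is accepting exactly when its $A$-coordinate lies in $F_A\cap(c+d\Z)$ and its $\overline{B}$-coordinate in $F_{\overline B}\cap(c+d\Z)$. Hence the number of accepting states of the product equals $\sum_{c=0}^{d-1}a_cb_c$, where $a_c=|F_A\cap(c+d\Z)|$ and $b_c=|F_{\overline B}\cap(c+d\Z)|$. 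This already explains why the previous lemma could only reach multiples of $m$: when $d=1$ the sum collapses to $|F_A|\cdot|F_{\overline B}|=m\cdot|F_{\overline B}|$. The whole point of the present lemma is therefore to exploit a nontrivial common factor $d>1$ together with an uneven distribution of the accepting states over the residue classes.

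Write $\alpha=mx+r$ with $x\geq 1$ and $1\leq r\leq m-1$. I would set $d=n+r+1$ and give $A$ a cycle of length $p=dP'$, where $P'\geq m-r$ is chosen coprime to $x+1$, with accepting states placed so that $a_0=m-r$, $a_1=\dots=a_r=1$, and $a_c=0$ for $c>r$ (concretely $F_A=\{1,\dots,r\}\cup\{0,d,2d,\dots,(m-r-1)d\}$). For $B$ I would take a cycle of length $q=d(x+1)$, so that $\gcd(p,q)=d$ since $\gcd(P',x+1)=1$, and distribute $F_{\overline B}$ by putting $b_0=x$ accepting states in class $0$, filling each of the classes $1,\dots,r$ completely with $b_c=x+1$ states, and dumping the remaining accepting states of $\overline{B}$ into the $n$ residue classes $r+1,\dots,d-1$ on which $a_c=0$. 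A direct computation then gives the product count $\sum_c a_cb_c=(m-r)x+r(x+1)=mx+r=\alpha$, while the complement $B$ retains exactly $n$ accepting states. The key arithmetic check is that the dump zone has enough room: its capacity is $(d-1-r)(x+1)=n(x+1)$ while the amount to be dumped is $(q-n)-\bigl((r+1)x+r\bigr)=nx+1$, so the inequality $nx+1\leq n(x+1)$ holds precisely because $n\geq 1$. The efficiency of the heavy class (weight $m-r$ filled only to $x$) is exactly what makes the construction work uniformly in $n$, including the small values of $n$ that a plain block pattern for $A$ cannot handle.

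The main obstacle is minimality, which must be established for all three automata so that the counted accepting states really equal the respective accepting state complexities. For $A$ I would invoke Lemma~\ref{lem:pfa-language}: a shift by a proper divisor $t$ of $p$ cannot preserve $F_A$, because the single accepting state in residue class $1$ (the position $1$) together with the set $\{0,d,\dots,(m-r-1)d\}$ of accepting multiples of $d$ breaks every would-be symmetry --- a multiple-of-$d$ shift is forced to fix the position $1$, and a non-multiple shift moves $0$ into a class containing no further accepting multiple of $d$. An analogous argument, anchored by the single non-accepting state that $B$ carries in class $0$, gives minimality of $B$. The genuinely delicate point is the difference automaton: I would show via Lemma~\ref{lem:pfa-language} that its accepting set $S$ (of size $\alpha$, sitting inside the cycle of length $\operatorname{lcm}(p,q)=dP'(x+1)$) admits no proper divisor $t$ of its length with $i+t\bmod\operatorname{lcm}(p,q)\in S$ for all $i\in S$, using the coprimality $\gcd(P',x+1)=1$ to rule out periods inherited from either factor and the asymmetric class weights to rule out periods that are multiples of $d$. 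Verifying this last non-periodicity cleanly is where most of the work lies; everything else reduces to the bookkeeping summarized above.
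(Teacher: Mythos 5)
Your overall strategy---realizing $L(A)\setminus L(B)$ as $L(A)\cap C(L(B))$, decomposing the reachable part of the cross product by residue classes modulo $d=\gcd(p,q)$, and engineering the class counts so that $\sum_{c}a_cb_c=(m-r)x+r(x+1)=mx+r=\alpha$---is sound, and it stays inside the cross-product framework the paper sets up for this subsection (cf.\ the proof of Lemma~\ref{lem:diff-upper-range-less}, which is the degenerate case $d=1$). Your bookkeeping is correct: the dump-zone inequality $nx+1\le n(x+1)$ holds exactly because $n\ge 1$, the complement $B$ retains $n$ accepting states, $|F_A|=m$, and the choice $P'\ge m-r$ with $\gcd(P',x+1)=1$ makes the construction well defined with $\gcd(p,q)=d$.

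The gap is that minimality---above all of the product automaton---is announced rather than proved, and this is not a formality: if the product \pfa\ were non-minimal, the accepting state complexity of the difference would drop to a proper divisor of $\alpha$, so the lemma would fail. You also leave the placement of the $nx+1$ dumped states unspecified, and not every placement makes $B$ minimal (already for $n=2$, $r=1$ a symmetric placement of the two states of $F_B$ creates a period), so a concrete choice must be fixed. That said, the anchors you name do suffice, and the verification is shorter than you suggest. For the product, let $S$ be its accepting set and $t$ a proper divisor of $\mathrm{lcm}(p,q)$ with $S+t=S$. First, the class-count function $c\mapsto a_cb_c$ must be invariant under $c\mapsto c+t\bmod d$; its support is the interval $\{0,1,\dots,r\}$ of length $r+1<d=n+r+1$, and no nonzero rotation of $\Z/d\Z$ preserves an interval of length less than $d$, so $d\mid t$. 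Second, $S\cap(1+d\Z)$ equals $1+p\Z$ (because $a_1=1$ while $b_1=x+1$), and its invariance then forces $p\mid t$. Third, a shift by a proper multiple of $p$ fixes the $A$-coordinate and rotates class $0$ of $B$, a cycle of length $x+1$, by a nonzero amount (here $\gcd(P',x+1)=1$ is used); an invariant subset would have cardinality divisible by some divisor greater than one of $x+1$, which $|F_{\overline B}\cap(0+d\Z)|=x$ is not. The same three-step pattern applied to $F_A$ and to $F_B$ directly settles minimality of the two input automata once the dumped states are placed explicitly. With these verifications written out, your construction proves the lemma.
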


By combining the Lemmata~\ref{lem:diff-lower-range},
\ref{lem:diff-upper-range-less}, and~\ref{lem:diff-upper-range-n-1} we
obtain the following corollary.

\begin{cor}\label{cor:set-minus}
  We have
	$$
	g^{\asc,u}_{\setminus}(m,n)=g_{\setminus,\pfa}^{asc,u}(m,n)=
	\begin{cases}
          \{m\} & \mbox{if $n=0$,}\\
          \{0\} & \mbox{if $m=0$,}\\
          \{0\}\cup\N & \mbox{otherwise.}
	\end{cases}
	$$
\end{cor}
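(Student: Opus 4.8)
The plan is to exploit the identity $L(A)\setminus L(B)=L(A)\cap C(L(B))=L(A)\cap L(\overline B)$, exactly as in Lemma~\ref{lem:diff-upper-range-less}, but now with \emph{non-coprime} cycle lengths. Since the cross product of two unary permutation automata is again a unary \pfa, the difference is accepted by a \pfa~$D=A_{w''}$ whose single reachable cycle has length $\operatorname{lcm}(p,q)$, where $p$ and $q$ are the cycle lengths of~$A$ and~$\overline B$. I would realize the target value $\alpha$ as the number of accepting states of~$D$, and then use the minimality criterion of Lemma~\ref{lem:pfa-language} together with the result $\asc(L)=\asc(A)$ for minimal \dfa s recalled in Section~\ref{sec:defs} to conclude that $\asc(L(A)\setminus L(B))=\alpha$.

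The key step is a counting identity. Let $d=\gcd(p,q)$, let $S_A$ and $S_{\overline B}$ be the sets of accepting positions of~$A$ and~$\overline B$, and for $c\in[0,d-1]$ let $a_c$ and $b_c$ be the numbers of elements of $S_A$ and $S_{\overline B}$ lying in residue class~$c$ modulo~$d$. The map $j\mapsto(j\bmod p,\,j\bmod q)$ is a bijection from $[0,\operatorname{lcm}(p,q)-1]$ onto the reachable states of~$D$, and a state is accepting iff its first coordinate lies in $S_A$ and its second in $S_{\overline B}$; a pair $(s,s')\in S_A\times S_{\overline B}$ is realised by some such $j$ exactly when $s\equiv s'\pmod d$. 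Hence the number of accepting states of~$D$ is $\sum_{c=0}^{d-1}a_c b_c$. When $\gcd(p,q)=1$ this collapses to $m\cdot|S_{\overline B}|$, a multiple of~$m$; this is precisely why Lemma~\ref{lem:diff-upper-range-less} only reaches multiples of~$m$, and it tells me that to obtain $\alpha$ with $\alpha\bmod m\neq0$ I must force $d>1$ and spread $S_A$ over two residue classes.

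Concretely, write $\alpha=mx+r$ with $1\le r\le m-1$ and $x\ge1$ (here $x\ge1$ because $\alpha\ge m+1$). I would pick $d\ge n+2$ and an integer $P\ge\max\{m-r,r\}$ coprime to $x+1$ (e.g.\ a large prime), and set $p=dP$ and $q=d(x+1)$, so that $\gcd(p,q)=d$ and $\operatorname{lcm}(p,q)=dP(x+1)$. Place $m-r$ of $A$'s accepting states in residue class~$0$ and the remaining~$r$ in class~$1$, giving $a_0=m-r$, $a_1=r$, and $a_c=0$ for $c\ge2$; let $B$ have exactly one accepting state in class~$0$, none in class~$1$, and its remaining $n-1$ accepting states in classes $c\ge2$, so that $\overline B$ has $b_0=x$, $b_1=x+1$, and $\asc(B)=n$. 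Then the accepting count of~$D$ is $(m-r)x+r(x+1)=mx+r=\alpha$, and the $n-1$ ``free'' states contribute nothing since $a_c=0$ for $c\ge2$; the choices $d\ge n+2$ and $P\ge\max\{m-r,r\}$ guarantee enough room in each residue class for this placement.

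The main obstacle is minimality. Using Lemma~\ref{lem:pfa-language} I would verify that $A$ and $B$ are minimal, but the real content is showing that the product automaton~$D$ is state-minimal, i.e.\ that no divisor $t\mid\operatorname{lcm}(p,q)$ preserves its set of accepting states; only then does the quoted $\asc$-result turn its $\alpha$ accepting states into $\asc(L(A)\setminus L(B))=\alpha$. I expect to secure this by fixing the precise positions of the accepting states within their residue classes, and by taking $P$ prime so that the divisors of $\operatorname{lcm}(p,q)=dP(x+1)$ are tightly constrained, thereby ruling out every candidate period; as in the neighbouring lemmas, these verifications are routine but tedious and would be the part most prone to hidden edge cases.
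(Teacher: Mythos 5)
Your proposal is correct in substance but does considerably more (and, in one respect, slightly less) than the paper's own proof of this corollary. The paper proves Corollary~\ref{cor:set-minus} in two lines: the cases $m=0$ and $n=0$ follow from $\emptyset\setminus L=\emptyset$ and $K\setminus\emptyset=K$, and the case $m,n\geq 1$ is just the union of the ranges already established in Lemmata~\ref{lem:diff-lower-range}, \ref{lem:diff-upper-range-less}, and~\ref{lem:diff-upper-range-n-1}. What you have written is essentially a proof of the one ingredient whose proof the paper omits, namely Lemma~\ref{lem:diff-upper-range-n-1} ($\alpha\geq m+1$ with $\alpha\bmod m\neq 0$), via the same mechanism as Lemma~\ref{lem:diff-upper-range-less} (cross product of $A$ with $\overline{B}$, unary \pfa s encoded as cyclic words) but with $\gcd(p,q)=d>1$. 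Your counting identity $\sum_{c}a_cb_c$ for the number of reachable accepting states of the product is correct, it correctly explains why coprime cycle lengths only yield multiples of $m$, and the choice $a_0=m-r$, $a_1=r$, $b_0=x$, $b_1=x+1$ does give $(m-r)x+r(x+1)=\alpha$ while keeping $\asc(A)=m$ and $\asc(B)=n$. Two remarks. First, to actually conclude the corollary you must still say the two trivial sentences for $m=0$ and $n=0$ and invoke Lemmata~\ref{lem:diff-lower-range} and~\ref{lem:diff-upper-range-less} for $[0,m]$ and for the multiples of $m$; you clearly intend this but it is not written. Second, the deferred minimality of the product automaton is the genuine content, not a formality: if the accepting set of the product has a nontrivial period, the minimal \dfa\ has strictly fewer accepting states and the value $\alpha$ is lost, so the ``fix the positions, take $P$ a large prime'' step must be carried out (choosing $P$ prime with $P>\max\{m,n,\alpha,d\}$ makes the divisor analysis of $dP$, $d(x+1)$ and $dP(x+1)$ tractable, since any period with orbit size at most the number of accepting states must be a multiple of $P$). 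You correctly identify this as the fragile part; with it filled in, your argument is a valid substitute for the omitted lemma and hence for the corollary.
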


\begin{proof}
  Since~$K\setminus \emptyset=K$ and~$\emptyset\setminus L=\emptyset$
  for every languages~$K,L\subseteq \Sigma^*$ the first two statements
  follow immediately.  Additionally the last statement follows by the
  Lemmata~\ref{lem:diff-lower-range}, \ref{lem:diff-upper-range-less},
  and~\ref{lem:diff-upper-range-n-1}.
\end{proof}

\subsection{Intersection}

We show that the left open unary case for the intersection operation
for both \pfa s and \dfa s differs from the solved general case. It is
not hard to see that at most the numbers in the range~$[1,nm]$ can be
reached. We split this interval into three smaller ones,
namely~$[0,\max\{n,m\}],$ $[\max\{n,m\}+1,nm-\min\{n,m\}]$
and~$[nm-\min\{n,m\}+1,nm]$.

\begin{lem}\label{lem:intersection-lower-interval}
  We have~$[0,\max\{n,m\}]\subseteq g^{\asc,u}_{\cap,\pfa}(m,n)$.
\end{lem}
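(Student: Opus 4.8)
The plan is to reduce to the case $m\ge n$ using commutativity of intersection, which gives $g_{\cap,\pfa}^{\asc,u}(m,n)=g_{\cap,\pfa}^{\asc,u}(n,m)$ and leaves the symmetric quantity $\max\{n,m\}=m$ to be realised; I treat $m\ge n\ge 1$, the degenerate cases $m=0$ or $n=0$ forcing an empty intersection and being immediate. I then give a \emph{single} family of witnesses covering every target value $\alpha\in[0,m]$ at once. Recall from the string encoding that a unary \pfa\ $A_w$ accepts $\{a^k\mid (k\bmod|w|)\in S\}$, where $S$ is its set of one-positions; for two such automata of periods $p$ and $q$ the intersection language is again periodic, with period $\operatorname{lcm}(p,q)$, so its one-set is $I=\{r\in[0,\operatorname{lcm}(p,q))\mid (r\bmod p)\in S_A \text{ and } (r\bmod q)\in S_B\}$. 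The key design choice is to force $q\mid p$, so that $\operatorname{lcm}(p,q)=p$ and the intersection is represented by a length-$p$ string whose number of ones I can read off directly.

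Concretely, I would take $B=A_{w'}$ with $w'=1^n0^{q-n}$, so that $S_B=\{0,\dots,n-1\}$ and $\asc(B)=n$, and take $A=A_w$ of period $p=qs$, placing its $m$ ones so that exactly $\alpha$ of them sit in the residue classes $\{0,\dots,n-1\}\pmod q$ and the remaining $m-\alpha$ in the classes $\{n,\dots,q-1\}\pmod q$. Each residue class of $[0,p)$ contains $s$ positions, so such a placement exists whenever $ns\ge\alpha$ and $(q-n)s\ge m-\alpha$, both of which hold once $s>m$. With this placement the intersection one-set $I$ consists exactly of the $\alpha$ ones of $A$ lying in the selected classes, hence $|I|=\alpha$.

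The main obstacle, and the point where the construction must be chosen with care, is guaranteeing that the length-$p$ string for the intersection is \emph{minimal}, so that its accepting state complexity is genuinely $\alpha$ and not a proper divisor of it. Here I would exploit Lemma~\ref{lem:pfa-language} together with a number-theoretic trick: choose both $q$ and $s$ to be primes larger than $m$. Then every prime factor of $p=qs$ exceeds $m\ge\alpha$, so $\gcd(p,\alpha)=1$, and likewise $\gcd(p,m)=1$ and $\gcd(q,n)=1$. If some proper divisor $t\mid p$ made a one-set invariant under translation by $t$, that set would be a union of cosets of the subgroup of order $p/t>1$, forcing $p/t$ to divide its cardinality; since $\gcd(p,\alpha)=1$ this is impossible unless $t=p$. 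By Lemma~\ref{lem:pfa-language} the intersection automaton is therefore minimal and $\asc(L(A)\cap L(B))=\alpha$, while the identical argument with $m$ in place of $\alpha$ (and with $n$, $q$ for $B$) shows that $A$ and $B$ are themselves minimal with the prescribed accepting state complexities.

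Finally I would dispose of the endpoint $\alpha=0$ within the same scheme: placing all $m$ ones of $A$ in the unselected classes makes the intersection empty, which has accepting state complexity $0$, while $A$ and $B$ remain minimal by the same $\gcd$ argument. This single construction thus realises every value in $[0,m]=[0,\max\{n,m\}]$, establishing the lemma. I expect the only delicate bookkeeping to be the verification that enough positions are available for the placement and that all three minimality checks reduce cleanly to coprimality, which the choice of prime periods $q,s>m$ handles uniformly.
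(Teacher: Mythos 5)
Your construction is correct and is essentially the paper's: there too $B=A_{1^n0}$ filters the residue classes $0,\dots,n-1$ modulo its period $n+1$, which divides the period of $A=A_{(10^n)^\alpha(0^n1)^{m-\alpha}0^{n+1}}$, whose $m$ ones are split between selected and unselected classes so that exactly $\alpha$ survive the intersection. The only substantive difference is the minimality verification --- the paper reads it off the explicit strings (a trailing block of $n+1$ zeros in $A$, consecutive ones in $B$), whereas you obtain all three checks uniformly from prime periods $q,s>m$ together with the divisor criterion of Lemma~\ref{lem:pfa-language} and a $\gcd$ argument --- which is a sound, slightly less explicit alternative.
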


\begin{proof}
  Since for the intersection operation the ordering of the input
  languages is irrelevant we assume~$m\geq n$.  Let~$A=A_w$
  and~$B=A_{w'}$ for~$w=(10^n)^\alpha(0^n1)^{m-\alpha}0^{n+1}$
  and~$w'=1^n0$, \resp.  Both \pfa s are minimal since the last~$n+1$
  states of~$A$ do not contain an accepting state and the accepting
  states of~$B$ form a sequence.  The minimal \dfa\ accepting the
  language~$L(A)\cap L(B)$ is the \pfa~$A_{w''}$
  for~$w''=(10^n)^\alpha0^{(n+1)\cdot(m-\alpha+1)}$ which obviously
  has accepting state complexity~$\alpha$.
\end{proof}

We found out by exhaustive search that for small~$m$ and~$n$ the
following conjecture holds.

\begin{con}
  All numbers in~$[\max\{n,m\}+1,nm-\min\{n,m\}]$ which are \emph{not}
  in
  \begin{multline*}
    [\max\{n,m\},n+m]\cup\\
    \hspace{0.5cm}\{\,t_n x_m\mid t_n\mbox{ is a nonzero divisor of~$n$ and } 0\leq x_m\leq (nm-\min\{n,m\})\div t_n\,\}\,\cup\\
    \hspace{1cm}\{\,t_m x_n\mid t_m\mbox{ is a nonzero divisor of~$m$
      and } 0\leq x_n\leq (nm-\min\{n,m\})\div t_m\,\}
  \end{multline*}
  are magic.
\end{con}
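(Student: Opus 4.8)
The plan is to first translate the problem into a purely number\-theoretic statement about the cross\-product automaton, and then to analyse which values survive minimisation. Since both~$A$ and~$B$ are unary \pfa s, they are pure cycles; write $A=A_w$ and $B=A_{w'}$ with cycle lengths $p=|w|$ and $q=|w'|$ and accepting sets $I_A\subseteq[0,p-1]$, $I_B\subseteq[0,q-1]$, where minimality of~$A$ and~$B$ together with Lemma~\ref{lem:pfa-language} yields $|I_A|=m$, $|I_B|=n$ and the corresponding non\-foldability conditions. A word~$a^k$ lies in $L(A)\cap L(B)$ iff $k\bmod p\in I_A$ and $k\bmod q\in I_B$, so the intersection is again purely cyclic with period $L=\operatorname{lcm}(p,q)$. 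Setting $d=\gcd(p,q)$ and grouping the accepting positions by their residue~$r$ modulo~$d$, the Chinese Remainder Theorem shows that the number of accepting states of the (not necessarily minimal) product automaton is
\[
N=\sum_{r=0}^{d-1}a_r b_r,\qquad a_r=|\{\,i\in I_A: i\equiv r\bmod d\,\}|,\quad b_r=|\{\,j\in I_B: j\equiv r\bmod d\,\}|,
\]
subject to $\sum_r a_r=m$ and $\sum_r b_r=n$.

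First I would determine, over all choices of~$d$ and all admissible profiles $(a_r)$, $(b_r)$, exactly which integers arise as the dot product $N=\sum_r a_r b_r$, i.e.\ as dot products of a composition of~$m$ against a composition of~$n$ into equally many (possibly zero) parts. This is the combinatorial core, and its value set has gaps. The two ``multiples\-of\-divisors'' families in the conjectured set should emerge here: taking $(b_r)$ constant and equal to a divisor~$t_n$ of~$n$ on its support forces $N=t_n\cdot\sum_{r\in\mathrm{supp}(b)}a_r$, a multiple of~$t_n$, which yields the numbers $t_n x_m$; symmetrically, constant $(a_r)$ profiles give the numbers $t_m x_n$. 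The initial interval $[\max\{n,m\},n+m]$ should arise from ``almost disjoint'' profiles whose supports overlap in a single coordinate, generalising the construction of Lemma~\ref{lem:intersection-lower-interval}.

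Next I would incorporate minimisation. By the criterion of Lemma~\ref{lem:pfa-language}, the product on~$\Z/L\Z$ folds to a cycle of length~$L/f$ precisely when its accepting set is invariant under the shift by~$L/f$, in which case the number of accepting states drops to~$N/f$; by Theorem~1 the accepting state complexity equals~$N/f$ for the maximal admissible~$f$. The task is to show that every folding factor~$f$ compatible with minimal~$A$ and~$B$ again sends~$N$ into the conjectured set: intuitively, a fold by~$f$ imposes a periodicity on $(a_r)$ or $(b_r)$ that is already one of the structured profiles above, so that $N/f$ is itself of the form $t_n x_m$ or $t_m x_n$. Combining the dot\-product analysis with the folding analysis, I would conclude that every~$\alpha$ attainable as $\asc(L(A)\cap L(B))$ in the range $[\max\{n,m\}+1,nm-\min\{n,m\}]$ lies in the indicated set, so that every number of that range outside the set is magic.

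The hard part will be the exact characterisation of the dot\-product value set together with the admissible folding factors, uniformly in~$d$, $p$, and~$q$: one must rule out sporadic representations $N=\sum_r a_r b_r$ that hit a supposedly\-magic~$\alpha$ while still respecting the non\-foldability (minimality) of~$I_A$ and~$I_B$. In particular the interaction between $d=\gcd(p,q)$ and the freedom to enlarge~$p$ and~$q$ makes the bounds on~$x_m$ and~$x_n$ delicate, and it is exactly this interaction that the exhaustive search confirms only for small~$m$ and~$n$. I expect that settling it in general requires a careful induction on the number of residue classes carrying both $a_r>0$ and $b_r>0$, which is why the statement is posed as a conjecture rather than a theorem.
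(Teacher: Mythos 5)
There is no proof in the paper to compare against: the statement is posed as a conjecture, supported only by an exhaustive search for small $m$ and $n$, and the authors explicitly do not claim a proof. Judged on its own terms, your reduction is correct as far as it goes. For unary minimal \pfa s with cycle lengths $p$ and $q$ and $d=\gcd(p,q)$, the initially reachable part of the product automaton is a single cycle of length $\mathrm{lcm}(p,q)$, a pair $(i,j)$ is reachable iff $i\equiv j \pmod d$, and hence the number of reachable accepting states is exactly the dot product $N=\sum_{r=0}^{d-1}a_rb_r$ of the residue-class profiles of $I_A$ and $I_B$; the subsequent collapse under minimisation is governed by the shift-invariance criterion of Lemma~\ref{lem:pfa-language}. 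This is the natural frame for the problem and plausibly the one behind the exhaustive search.

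The gap is the one you name yourself: the proposal is a programme, not a proof. You never characterise which values $N$ arise subject to $\sum_r a_r=m$, $\sum_r b_r=n$, the non-foldability of $I_A$ and $I_B$ (a condition on the sets themselves, which constrains the profiles $(a_r)$, $(b_r)$ only indirectly), and the admissible folding factors of the product cycle; without that characterisation nothing in $[\max\{n,m\}+1,nm-\min\{n,m\}]$ is shown to be magic. Moreover, the two mechanisms you do analyse are not exhaustive even at the level of dot products: for $m=n=4$ the profiles $(1,3)$ and $(1,3)$ give $N=1+9=10=2\cdot 5$, a value of the form $t_nx_m$ with $x_m=5>m$, so it is not produced by your constant-profile argument (which only yields $x_m\le m$) yet must be accounted for, and the conjectured bound $x_m\le(nm-\min\{n,m\})\div t_n$ reflects exactly such representations. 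A correct completion would also have to confront an ambiguity in the target set itself: as literally stated, the divisor $t_n=1$ already makes the displayed union cover the whole range, so the intended exceptional set (and hence what you must prove unreachable) needs to be pinned down before the dot-product and folding analysis can be closed.
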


Next we investigate the numbers in the
range~$[nm-\min\{n,m\}+1,nm]$. For showing that all numbers
except~$nm$ are magic we prove the following structural property of
the cross product for \pfa s.

\begin{lem}\label{lem:intersection-reachability}
  Let~$q_0,q_1$ and~$p_0,p_1$ be states of the minimal unary \pfa
  s~$A$ and~$B$, \resp. If~$(q_0,p_0),$ $(q_1,p_0)$ and~$(q_0,p_1)$
  are initially reachable in the cross product automaton~$C$
  then~$(q_1,p_1)$ is initially reachable, too.
\end{lem}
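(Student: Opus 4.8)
The plan is to work with the concrete cyclic structure of unary permutation automata. Since $A$ and $B$ are minimal unary \pfa s, each consists of a single cycle, say of lengths $M=\sc(L(A))$ and $N=\sc(L(B))$. I will index the states of $A$ as $q_0,q_1,\ldots,q_{M-1}$ with $q_i\cdot a=q_{(i+1)\bmod M}$, and similarly the states of $B$ as $p_0,p_1,\ldots,p_{N-1}$ with $p_j\cdot a=p_{(j+1)\bmod N}$. In the cross product automaton $C$ the pair $(q_i,p_j)$ reached from the initial state $(q_0,p_0)$ by the word $a^k$ is precisely $(q_{k\bmod M},p_{k\bmod N})$. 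Hence a pair $(q_i,p_j)$ is initially reachable if and only if there exists a non-negative integer $k$ with $k\equiv i \pmod M$ and $k\equiv j \pmod N$.

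First I would reformulate the reachability of a state $(q_i,p_j)$ in this congruence language: by the Chinese Remainder Theorem the system $k\equiv i\pmod M$, $k\equiv j\pmod N$ is solvable exactly when $i\equiv j \pmod{\gcd(M,N)}$. So ``$(q_i,p_j)$ is initially reachable'' is equivalent to the single arithmetic condition $i\equiv j \pmod{d}$, where $d=\gcd(M,N)$. This turns the lemma into a purely number-theoretic statement, which is the cleanest route.

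Next I would translate the three hypotheses into this language. Writing $q_0=q_{i_0}$, $q_1=q_{i_1}$, $p_0=p_{j_0}$, $p_1=p_{j_1}$, the reachability of $(q_0,p_0)$, $(q_1,p_0)$, and $(q_0,p_1)$ gives respectively $i_0\equiv j_0$, $i_1\equiv j_0$, and $i_0\equiv j_1$ (all mod $d$). From the first and second I get $i_1\equiv i_0 \pmod d$, and combining with the third yields $i_1\equiv i_0\equiv j_1 \pmod d$, hence $i_1\equiv j_1\pmod d$. By the equivalence established above, this is exactly the condition for $(q_1,p_1)$ to be initially reachable, which is what we want. The argument is really just transitivity of congruence modulo $d$ once the reachability predicate has been identified.

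The main obstacle is not the algebra, which is elementary, but justifying the clean description of the reachable set and the CRT equivalence in a way that respects the paper's conventions for minimal unary \pfa s. In particular I must be careful that ``state of a minimal unary \pfa'' really means a cycle state with the stated transition behaviour, and that the initial state corresponds to offset $0$ in each component; the indices $i_0,j_0$ attached to the ``$0$'' states need only satisfy $i_0\equiv j_0\pmod d$ rather than both being literally $0$, so I would phrase everything in terms of residues modulo $d$ to avoid assuming the initial pair is $(q_0,p_0)$ in the paper's own indexing. Once that bookkeeping is pinned down, the proof reduces to the two lines of congruence chasing described above.
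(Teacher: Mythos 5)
Your proof is correct, but it takes a different route from the paper's. The paper argues synthetically: since all three pairs are initially reachable and the reachable part of the unary cross product is a single cycle, there are words $w_q$ and $w_p$ taking $(q_0,p_0)$ to $(q_1,p_0)$ and to $(q_0,p_1)$, respectively; because $w_q$ fixes $p_0$ and $B$ is one cycle, $w_q$ acts as the identity on all of $B$ (and symmetrically $w_p$ on $A$), so $(q_0,p_0)\cdot w_qw_p=(q_1,p_0)\cdot w_p=(q_1,p_1)$. You instead characterize the reachable set outright: indexing each cycle from its initial state, $(q_i,p_j)$ is reachable iff $i\equiv j \pmod{\gcd(M,N)}$ by the generalized Chinese Remainder Theorem, and the claim reduces to transitivity of congruence. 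Your version proves strictly more than the lemma asks (a complete description of the reachable states of $C$, which also feeds directly into the counting used in Theorem~\ref{thm:unary-intersection-upper-interval}), at the cost of invoking CRT and some index bookkeeping; the paper's word-composition argument is shorter, stays inside automata theory, and is the form the authors then interrogate when they discuss whether the lemma survives for larger alphabets (where the letters no longer commute and the identity-action step is exactly what fails). You correctly flag the one notational trap, namely that the lemma's $q_0,p_0$ are arbitrary states rather than the initial ones, and your residue formulation handles it.
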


\begin{proof}
  Let~$\Sigma=\{a\}$ be the input alphabet of~$A,$ $B$,
  and~$C$. Since~$(q_0,p_0),$ $(q_1,p_0)$, and~$(q_0,p_1)$ are
  initially reachable in~$C$ we know that there are words~$w_q$
  and~$w_p$ which map~$(q_0,p_0),$ onto~$(q_1,p_0)$ and~$(q_0,p_0),$
  onto~$(q_0,p_1)$. Because~$A$ and~$B$ are unary we observe
  that~$w_q$ and~$w_p$ induce the identity on~$B$ and~$A$, \resp. This
  implies that~$(q_0,p_0)\cdot w_qw_p=(q_1,p_0)\cdot w_p=(q_1,p_1)$
  which proves the stated claim.
\end{proof}

One may ask whether Lemma~\ref{lem:intersection-reachability} holds
for alphabets of arbitrary size. In general it is not true that~$w_q$
and~$w_p$ induce the identity on~$B$ and~$A$, \resp. Instead those
words induce a cycle on~$B$ and~$A$ that has a size that divides the
order of the word. If the cycle of~$A$ and~$B$ contain~$q_0,$ $q_1$
and~$p_0,$ $p_1$, \resp, the statement of the Lemma remains true. We
leave it to the reader to prove or disprove the lemma above for
alphabets of at least two letters.  As mentioned before we use
Lemma~\ref{lem:intersection-reachability} to prove that a whole range
of numbers in the upper interval cannot be reached.

\begin{thm}\label{thm:unary-intersection-upper-interval}
  We have~$[nm-\min\{n,m\}+1,nm-1]\nsubseteq
  g_{\cap,\pfa}^{\asc,u}(m,n)$.
\end{thm}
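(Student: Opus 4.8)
The plan is to analyze the reachable part of the cross-product automaton $C$ of the two minimal unary \pfa s $A$ and $B$ and to track how many accepting states survive into the minimal \dfa. Write $d_A,d_B$ for the numbers of states of $A,B$. Since a minimal unary \pfa\ is a single cycle, the reachable part of $C$ is again a single cycle, of length $\ell=\operatorname{lcm}(d_A,d_B)$; hence $L(A)\cap L(B)$ is a \emph{purely periodic} unary language, its minimal \dfa\ is a cycle of some length $p\mid\ell$, and, since the accepting state complexity of a language equals the number of final states of its minimal \dfa, that complexity is the number of accepting states of this minimal cycle. If $N$ denotes the number of \emph{reachable} accepting pairs in $C$, then collapsing the length-$\ell$ cycle to its minimal period multiplies the accepting-state count by $p/\ell$, so
$$\asc\bigl(L(A)\cap L(B)\bigr)=N\cdot\frac{p}{\ell}.$$
In particular $\asc=N$ when $p=\ell$, while $\asc\le N/2$ when $p<\ell$ (as then $\ell/p\ge 2$).

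Next I would use Lemma~\ref{lem:intersection-reachability} to constrain $N$. Consider the bipartite graph $G$ on vertex classes $F_A$ and $F_B$ (the accepting states of $A$ and $B$) with an edge $(q,p)$ exactly when the accepting pair $(q,p)$ is reachable in $C$, so $N=|E(G)|$. The lemma says that whenever $(q_0,p_0),(q_1,p_0),(q_0,p_1)$ are reachable, so is $(q_1,p_1)$; read on $G$, three edges $q_0p_0,\ q_1p_0,\ q_0p_1$ force the fourth edge $q_1p_1$. This is precisely the closure condition characterising a \emph{disjoint union of complete bipartite graphs}, so $G$ decomposes into bicliques whose left parts $A_1,A_2,\dots$ partition $F_A$ and whose right parts $B_1,B_2,\dots$ partition $F_B$, giving $N=\sum_i |A_i|\,|B_i|$.

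The combinatorial heart is then the dichotomy: with $a_i=|A_i|$, $b_i=|B_i|$, $\sum_i a_i=m$, $\sum_i b_i=n$, one has $N=mn$ or $N\le mn-\min\{m,n\}$. I would prove this from the identity $mn-N=\sum_i a_i\,(n-b_i)$, which holds since $mn=(\sum_i a_i)\,n$. Assuming $N<mn$ and, say, $n\le m$: if a single part carries all of $F_A$, the sum equals $m(n-b)\ge m\ge n$ for the deficient value $b<n$; otherwise at least two parts have $a_i>0$, and then $\sum_i a_i(n-b_i)\ge\sum_{a_i>0}(n-b_i)\ge (|\{i:a_i>0\}|-1)\,n\ge n$, using $\sum_{a_i>0}b_i\le n$. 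Either way the deficiency is at least $\min\{m,n\}$.

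Finally I would combine the pieces. If $N\le mn-\min\{m,n\}$, then $\asc\le N\le mn-\min\{m,n\}$, which lies below the interval $[\,mn-\min\{m,n\}+1,\ mn-1\,]$. If instead $N=mn$, then either $p=\ell$ and $\asc=mn$, lying above the interval, or $p<\ell$ and $\asc\le mn/2$; the interval is nonempty only when $\min\{m,n\}\ge 2$, whence $\max\{m,n\}\ge 2$ and $mn/2\ge\min\{m,n\}$, so $mn/2$ again lies at or below $mn-\min\{m,n\}$. In every case $\asc\bigl(L(A)\cap L(B)\bigr)$ misses the interval, which is the claim. I expect the main obstacle to be the clean justification of $\asc=N\cdot p/\ell$ together with the $N=mn,\ p<\ell$ branch: one must be certain that minimising the product cycle cannot place the accepting-state count strictly between $mn/2$ and $mn$, which is exactly what the integrality of $\ell/p$ delivers.
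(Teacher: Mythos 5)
Your proof is correct and follows essentially the same route as the paper: Lemma~\ref{lem:intersection-reachability} forces the number of reachable accepting pairs in the product to be either $nm$ or at most $nm-\min\{m,n\}$, and the remaining case $N=nm$ is excluded by a divisibility/period argument on the minimal \dfa. The paper dismisses the first step with ``clearly'' and cites an external lemma for the second, whereas you supply self-contained proofs of both (the biclique decomposition and the identity $\asc = N\cdot p/\ell$), which fills in real detail but does not constitute a different method.
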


\begin{proof}
  Let~$\alpha\in [nm-\min\{n,m\}+1,nm-1]$.  Clearly
  Lemma~\ref{lem:intersection-reachability} implies that for all \pfa
  s~$A$ and~$B$ their cross product automaton has either less
  than~$nm-\min\{n,m\}+1$ or~$nm$ initially reachable accepting
  states. On the other hand a result from~\cite[Lemma 4]{HoRa21a}
  implies that if a \pfa\ has~$nm$ accepting states and it is not
  minimal then its minimal \dfa\ has~$t$ accepting states for a
  divisor of~$nm$. Since every divisor of~$nm$ is less
  than~$nm-\min\{n,m\}+1$ the claim of the theorem follows.
\end{proof}

If we look at the cross product automaton~$C$ of unary \dfa s~$A$
and~$B$ we see that for every state~$q$ that is in the tail of~$A$
or~$B$ there is exactly one initially reachable state in~$C$ which
contains~$q$ as one of its components. So we obtain that~$C$ contains
at most~$nm-\min\{n,m\}+1$ initially reachable accepting states.
Together with Theorem~\ref{thm:unary-intersection-upper-interval} we
obtain the following corollary.

\begin{cor}\label{cor:unary-intersection-dfa}
  We have~$[nm-\min\{n,m\}+2,nm-1]\nsubseteq g_{\cap }^{\asc,u}(m,n)$.
\end{cor}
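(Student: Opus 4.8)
The plan is to reduce the claim to a counting statement about the cross product automaton and then combine it with Theorem~\ref{thm:unary-intersection-upper-interval}. First I would take minimal unary \dfa s~$A$ and~$B$ realising $\asc(L(A))=m$ and $\asc(L(B))=n$, and form their cross product~$C$, which accepts $L(A)\cap L(B)$. Since minimisation of a \dfa\ only ever merges states of equal finality, the number~$r$ of initially reachable accepting states of~$C$ is an upper bound for $\asc(L(A)\cap L(B))$, so it suffices to understand which values~$r$ can take. Recalling that a unary \dfa\ consists of a tail followed by a cycle, I would write~$A$ as a tail of length~$t_A$ attached to a cycle of length~$c_A$, and likewise for~$B$.

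The key step is to establish the \emph{dichotomy} that $r$ is either exactly~$nm$ or at most $nm-\min\{n,m\}+1$; equivalently, that as soon as one accepting pair fails to be reachable, at least $\min\{n,m\}-1$ of them do. I would classify a reachable accepting state~$(q,p)$ of~$C$ according to whether its components lie in the tails or the cycles of~$A$ and~$B$. For the tail components I would use the observation that each tail state of~$A$ or of~$B$ occurs in exactly one initially reachable state of~$C$, so these contribute only a tightly controlled amount. For the cycle--cycle components I would invoke the reachability closure of Lemma~\ref{lem:intersection-reachability}: restricted to the cycles of~$A$ and~$B$ the two components behave exactly as in the pure \pfa\ setting, which is precisely the situation governing Theorem~\ref{thm:unary-intersection-upper-interval}. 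Combining the two contributions should yield the stated dichotomy.

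Granting the dichotomy, the corollary follows quickly. If $r\le nm-\min\{n,m\}+1$ then $\asc(L(A)\cap L(B))\le nm-\min\{n,m\}+1$. If $r=nm$, then either the reachable part of~$C$ is already a-minimal, giving $\asc(L(A)\cap L(B))=nm$, or it is not, in which case the periodic part is a \pfa\ with $nm$ accepting states and the result from~\cite[Lemma 4]{HoRa21a} used in Theorem~\ref{thm:unary-intersection-upper-interval} forces the minimal \dfa\ to have a number of accepting states dividing~$nm$, every proper divisor of~$nm$ being at most $nm-\min\{n,m\}+1$. Hence in every case $\asc(L(A)\cap L(B))\in[0,nm-\min\{n,m\}+1]\cup\{nm\}$, so no value in $[nm-\min\{n,m\}+2,nm-1]$ is attained, which is exactly the claim.

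The main obstacle I expect is the dichotomy itself, and in particular the tail--cycle interaction. The bound coming only from ``each tail state is reached once'' is not by itself sufficient: a \dfa\ with an entirely non-accepting tail and a cycle coprime to that of~$B$ can still realise all~$nm$ accepting pairs, so one genuinely has to rule out a count strictly between $nm-\min\{n,m\}+1$ and~$nm$ rather than merely bounding~$r$ below~$nm$. I would therefore track carefully how the greatest common divisor of~$c_A$ and~$c_B$, together with the single occurrences of the tail states, forces any loss of reachable accepting pairs to come in blocks of size at least $\min\{n,m\}-1$, mirroring the closure argument behind Theorem~\ref{thm:unary-intersection-upper-interval}.
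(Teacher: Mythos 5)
Your proposal is correct and follows essentially the same route as the paper: the observation that each tail state of~$A$ or~$B$ occurs in exactly one initially reachable state of the cross product, combined with the dichotomy from Theorem~\ref{thm:unary-intersection-upper-interval} (and the divisor argument from~\cite[Lemma 4]{HoRa21a} in the $r=nm$ case). The tail--cycle interaction you worry about at the end is in fact cleaner than you fear, since a single accepting tail state of one automaton already eliminates at least $\min\{n,m\}-1$ accepting pairs outright, and otherwise all accepting states lie on the cycles and the pure \pfa\ dichotomy applies.
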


As mentioned before the upper bound for~$g_{\cap,\pfa}^{\asc,u}(m,n)$
is not a magic number which is proven in the following lemma.

\begin{lem}
  \label{lem:unary-intersection-nm}
  We have~$nm\in g_{\cap,\pfa}^{\asc,u}(m,n)$.
\end{lem}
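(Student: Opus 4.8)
The plan is to exhibit explicit single-block witnesses whose cross product is one long cycle realizing all $nm$ pairs of accepting states, and then to argue that this cross product is already state-minimal, so that its $nm$ accepting states cannot be collapsed under minimization. Concretely, I would set $A=A_w$ with $w=1^m0^a$ and $B=A_{w'}$ with $w'=1^n0^b$, where $a,b\geq 1$ are chosen so that $p:=m+a$ and $q:=n+b$ are coprime (for instance, fix $a=1$ and pick $b\geq 1$ with $\gcd(m+1,n+b)=1$, which is always possible). Each of $A$ and $B$ is a unary \pfa\ whose accepting states form a single proper nonempty block, so by the minimality criterion of Lemma~\ref{lem:pfa-language} (no proper divisor of the cycle length can leave such a block invariant) both are minimal; hence $\asc(L(A))=m$ and $\asc(L(B))=n$ by the characterization $\asc(L)=\asc(A)$ for minimal \dfa s stated at the start of the preliminaries.

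Next I would analyze the cross product automaton~$C$. Because $\gcd(p,q)=1$, the Chinese Remainder Theorem identifies the reachable part of~$C$ with a single cycle of length~$pq$, under which the state reached after $k$ steps corresponds to $(k\bmod p,\,k\bmod q)$; in particular every pair of states is reachable, and $C$ is again of the form~$A_{w''}$. A state is accepting in~$C$ exactly when both coordinates are accepting, i.e.\ $k\bmod p<m$ and $k\bmod q<n$, and by the CRT bijection there are precisely $m\cdot n$ such~$k$. Thus $C$ accepts $L(A)\cap L(B)$ and has $nm$ accepting states, so it only remains to show that $C$ is state-minimal, for then the same characterization yields $\asc(L(A)\cap L(B))=\asc(C)=nm$.

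The crux is the minimality of~$C$, which I would again establish through Lemma~\ref{lem:pfa-language}: writing the accepting set of the length-$pq$ cycle as~$I$, I must rule out any divisor~$t$ of~$pq$ with $1\leq t<pq$ and $I+t\equiv I\pmod{pq}$. Under the CRT identification $I$ corresponds to the product set $G\times H$ with $G=\{0,\dots,m-1\}\subseteq\Z_p$ and $H=\{0,\dots,n-1\}\subseteq\Z_q$, and translation by~$t$ corresponds to translation by $(t\bmod p,\,t\bmod q)$. Since $G$ and $H$ are nonempty, $(G+(t\bmod p))\times(H+(t\bmod q))=G\times H$ forces $G+(t\bmod p)=G$ and $H+(t\bmod q)=H$ by projecting onto the two coordinates. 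A single proper nonempty block in a cyclic group has a unique right boundary, which any nontrivial translation would displace, so these equalities force $t\equiv 0\pmod p$ and $t\equiv 0\pmod q$, whence $t\equiv 0\pmod{pq}$ by coprimality, contradicting $1\leq t<pq$. Therefore $C$ is minimal and the lemma follows. I expect this final step to be the main obstacle: one must guarantee that the $nm$ accepting states survive minimization, and it is exactly the coprimality of~$p$ and~$q$ together with the rigidity of single-block patterns that makes the product pattern aperiodic.
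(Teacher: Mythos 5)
Your proposal is correct and follows essentially the same route as the paper, which takes $w=1^m0^n$ and $w'=1^n0^{m+1}$ (consecutive, hence coprime, cycle lengths) so that the product automaton is a single cycle of length $(n+m)(n+m+1)$ containing all $nm$ accepting pairs. The only difference is in the final step: the paper certifies minimality of the product by inspecting the lengths of the runs of consecutive accepting states, whereas you derive it from the translation-invariance criterion of Lemma~\ref{lem:pfa-language} by projecting onto the two CRT coordinates --- both are valid, and yours is arguably the more rigorous of the two.
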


\begin{proof}
  Let~$A=A_w$ and~$B=A_{w'}$ for~$w=1^m0^n$ and~$1^n0^{m+1}$,
  \resp. Since~$n+m$ and~$n+m+1$ are coprime it is obvious that length
  of their product automaton~$C$ is~$(n+m)\cdot(n+m+1)$.  So each pair
  of accepting states is initially reachable.  We observe that there
  are~$\max\{n,m\}-\min\{n,m\}$ sequences of accepting states of
  length~$\min\{n,m\}$. All of those sequences follow each other,
  i.e., only non-accepting states are between them.  There are also
  shorter sequences of accepting states in~$C$, e.g., an accepting
  states which follows and is followed by a non-accepting state. This
  implies that~$C$ has to be minimal which proves the stated claim.
\end{proof}

\subsection{Reversal}

The results of this subsection are in contrast to the general case
where arbitrary \dfa s are considered. Here the restriction for the
input automaton to be a \pfa\ provides magic numbers which are not
magic if the input automaton is not restricted.  For deterministic
finite automata with no restrictions the following result was proven
in~\cite{HoHo18}:
$$
g^{\asc}_{R}(m,n)=
\begin{cases}
  \{0\} & \mbox{if $m=0$,}\\
  \N & \mbox{otherwise.}
\end{cases}
$$
We show that in the case of permutation automata the number~$\alpha=1$
is magic for all~$m\geq 2$. Before we do this we need a special \pfa\
that plays an important role for the reversal operation. We want to
mention that for a unary language~$L$ its reversal~$L^R$ is equal
to~$L$. So we will only consider languages with at least two different
letters.  First we define~$\binom{S}{k}$ for a finite set~$S$ and a
non-negative integer to be the set of all subsets of~$S$ which have
size~$k$.  For a \pfa~$A=(\{q_0,q_1,\ldots,
q_{k-1},\},\Sigma,{}\cdot_A{},q_0,F_A)$, we
define \[A_R=(\binom{Q_A}{|F_A|},\Sigma,{}\cdot_{A_R}{},F_A,\{\,R\in
\binom{Q_A}{|F_A|}\mid q_0\in R\,\}),\] where $R\cdot_{A_R}
w=\{\,q\cdot_A w^{-1}\mid q\in R,w\in\Sigma\,\}$ for
all~$R\in\binom{Q_A}{|F_A|}$.  We want to mention here that~$A_R$ is a
well-defined \dfa\ since for every word~$w$ the mapping~$w^{-1}$ is
uniquely defined because~$A$ is a \pfa. Because~$w^{-1}$ applies the
reverse transitions to every state of~$A$ and every state of~$A_R$
that contains the initial state of~$A$ is accepting so it is not hard
to see that~$A_R$ accepts the language~$L(A)^R$. Before we prove our
results for the accepting state complexity of the reversal operation
of \pfa s we derive two structural properties of the \dfa~$A_R$. First
we count the number of initially reachable states in~$A_R$.

\begin{lem}\label{lem:number-AR}
  Let~$A$ be a minimal \pfa. Then there is an integer~$x\geq 1$ such
  that for every state~$q$ of~$A$ there are~$x$ initially reachable
  states in~$A_R$ containing~$q$.
\end{lem}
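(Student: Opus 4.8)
The plan is to recognise that $A_R$ is itself a permutation automaton and then exploit the group structure of its transitions together with the minimality of $A$. Throughout I assume $|F_A|\geq 1$; the case $F_A=\emptyset$ is degenerate (then $L(A)=\emptyset$, $A_R$ has the single state $\emptyset$, and the statement is not of interest). First I would observe that since $A$ is a \pfa, every letter $a\in\Sigma$ induces a permutation $\pi_a$ of $Q_A$, and therefore the map $R\mapsto R\cdot_{A_R}a=\{\,q\cdot_A a^{-1}\mid q\in R\,\}=\pi_a^{-1}(R)$ is a permutation of the finite set $\binom{Q_A}{|F_A|}$, because applying the bijection $\pi_a^{-1}$ to an $|F_A|$-element subset again yields an $|F_A|$-element subset. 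Hence $A_R$ is a \pfa\ as well. Let $G$ be the group generated by $\{\,\pi_a\mid a\in\Sigma\,\}$ acting on $Q_A$; this is indeed a group, being a submonoid of the finite symmetric group on $Q_A$. As $w$ ranges over $\Sigma^*$ the permutation $\pi_w$ ranges over all of $G$, so the set of initially reachable states of $A_R$ is exactly the orbit $\mathcal{O}=\{\,g(F_A)\mid g\in G\,\}$ of the initial state $F_A$ under the induced action of $G$ on $|F_A|$-element subsets of $Q_A$.

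Next I would feed in minimality. Since $A$ is minimal, every state of $A$ is reachable from $q_0$; as the reachable states of $A$ form the orbit $\{\,\pi_w(q_0)\mid w\in\Sigma^*\,\}=G\cdot q_0$, this forces $G$ to act \emph{transitively} on $Q_A$. Transitivity is the crucial ingredient. Define $x=|\{\,R\in\mathcal{O}\mid q_0\in R\,\}|$. For an arbitrary state $q$ of $A$ I would pick, using transitivity, some $g\in G$ with $g(q_0)=q$. Because $\mathcal{O}$ is a $G$-orbit, $g$ permutes $\mathcal{O}$, and it restricts to a bijection between $\{\,R\in\mathcal{O}\mid q_0\in R\,\}$ and $\{\,R\in\mathcal{O}\mid q\in R\,\}$: indeed $q_0\in R$ iff $q=g(q_0)\in g(R)$, and $g(R)\in\mathcal{O}$. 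Therefore the number of initially reachable states of $A_R$ containing $q$ equals $x$, independently of the choice of $q$.

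Finally I would verify $x\geq 1$. The union $\bigcup_{R\in\mathcal{O}}R$ is a $G$-invariant subset of $Q_A$, and it is nonempty since $|F_A|\geq 1$; by transitivity the only such subset is $Q_A$ itself, so every state $q$ lies in at least one reachable set, which gives $x\geq 1$.

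The main obstacle I anticipate is the bookkeeping in the first paragraph, namely verifying that the reachable states of $A_R$ are \emph{precisely} one full orbit of $F_A$ (not merely contained in one). This rests on the observation that the transition monoid of a permutation automaton over a finite state set is in fact a group, so that $\pi_w^{-1}$ sweeps out all of $G$ as $w$ varies over $\Sigma^*$. Once this group-orbit picture and the transitivity consequence of minimality are in place, the counting argument is just the orbit bijection above and is short.
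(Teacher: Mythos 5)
Your proof is correct and follows essentially the same route as the paper's: both arguments use a permutation of $Q_A$ carrying $q$ to $q'$ (available because the transition monoid of a minimal \pfa\ is a transitive group) to set up a bijection between the initially reachable states of $A_R$ containing $q$ and those containing $q'$. The only difference is presentational: you spell out, via the group generated by the letter permutations, the transitivity consequence of minimality, the identification of the reachable part of $A_R$ with an orbit, and the verification that $x\geq 1$, all of which the paper's terser proof leaves implicit.
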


\begin{proof}
  Let~$q$ be an arbitrary state of~$A$. Assume there are~$x\geq 1$
  states~$R_0$, $R_1,\ldots,R_{x-1}$ in~$A_R$ which contain~$q$.
  Since~$A$ is an~\pfa\ the images of those states are different
  regardless of the choice of the mapping. If we apply the mapping
  which maps~$q$ onto~$q'$ for any other state~$q'$ of~$A$ it follows
  directly that there are at least~$x$ states of~$A_R$ which
  contain~$q'$. Since this argument can be used symmetrically the
  claim of the lemma follows.
\end{proof}

Since the states of~$A_R$ are in turn sets we prove the following
property of~$A_R$ which is the automata theoretical interpretation of
the fact that bijections on elements induce bijections on sets of
those elements.

\begin{lem}
  \label{lem:AR-pfa-property}
  For every \pfa~$A$ the \dfa~$A_R$ is a \pfa, too.
\end{lem}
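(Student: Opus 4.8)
The plan is to show that $A_R$ is a permutation automaton by verifying that each input letter induces a bijection on the state set $\binom{Q_A}{|F_A|}$. Since $A_R$ is already established to be a well-defined \dfa\ in the text preceding the statement, what remains is precisely the permutation property: for every letter $a\in\Sigma$, the map $R\mapsto R\cdot_{A_R} a$ must be a bijection on $\binom{Q_A}{|F_A|}$.

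First I would fix a letter $a\in\Sigma$ and recall that since $A$ is a \pfa, the letter $a$ induces a permutation $\pi_a$ on $Q_A$ defined by $q\mapsto q\cdot_A a$. Consequently $a^{-1}$ induces the inverse permutation $\pi_a^{-1}$, which maps each $q$ to $q\cdot_A a^{-1}$. The transition in $A_R$ is $R\cdot_{A_R} a=\{\,q\cdot_A a^{-1}\mid q\in R\,\}=\pi_a^{-1}(R)$, i.e.\ it is the image of the set $R$ under the bijection $\pi_a^{-1}$. The core observation is that a bijection $\pi_a^{-1}$ on the ground set $Q_A$ induces a bijection on the collection of all $k$-element subsets $\binom{Q_A}{|F_A|}$, where $k=|F_A|$: applying a bijection to a set preserves cardinality, so $\pi_a^{-1}(R)$ is again a subset of size $k$ and thus lies in $\binom{Q_A}{|F_A|}$, and the inverse map $R\mapsto\pi_a(R)$ undoes it. Hence $R\mapsto R\cdot_{A_R} a$ is a bijection on the state set of $A_R$.

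The key steps, in order, are: (i) observe that each letter $a$ of the \pfa~$A$ induces a permutation $\pi_a$ of $Q_A$, so $a^{-1}$ induces $\pi_a^{-1}$; (ii) rewrite the $A_R$-transition as $R\cdot_{A_R} a=\pi_a^{-1}(R)$, the setwise image under a bijection; (iii) note that the setwise image under a bijection preserves cardinality, so the map sends $\binom{Q_A}{|F_A|}$ to itself; and (iv) conclude that this self-map is a bijection because it has a two-sided inverse given by the setwise image under $\pi_a$. Since this holds for every letter $a\in\Sigma$, every letter of $A_R$ induces a permutation on its state set, which is exactly the definition of a \pfa.

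I do not expect a serious obstacle here, as the argument is the standard fact that a bijection on a base set lifts to a bijection on its $k$-subsets; the only point requiring care is making the identification $R\cdot_{A_R} a=\pi_a^{-1}(R)$ precise from the definition of $\cdot_{A_R}$ and confirming that the image stays within $\binom{Q_A}{|F_A|}$ by cardinality preservation, rather than arguing directly about individual transitions. One mild subtlety is that $A_R$'s state set is defined to be \emph{all} of $\binom{Q_A}{|F_A|}$, so the bijection claim is over this full set; if one instead restricted to initially reachable states, one would additionally invoke that permutations map reachable states to reachable states, but as stated the cleaner route is simply to argue bijectivity on the entire set $\binom{Q_A}{|F_A|}$.
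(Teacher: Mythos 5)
Your proof is correct and follows essentially the same route as the paper, which likewise argues that the unique preimages under each letter of the \pfa~$A$ lift to unique preimages of the sets forming the states of~$A_R$ --- i.e., that a bijection on elements induces a bijection on $k$-subsets. Your version just spells out the cardinality preservation and the two-sided inverse more explicitly than the paper does.
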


\begin{proof}
  Since~$A$ is a \pfa\ for every letter~$a$ the preimage of any
  state~$q$ of~$A$ is uniquely defined. By applying this property to
  every state~$q$ in a state~$R$ of~$A_R$ we directly obtain the
  unique preimage of~$R$.
\end{proof}

Now we will prove our magic number result for the \pfa\ case of the
reversal operation.

\begin{lem}
  \label{lem:reversal-number-one}
  Let $m\geq 2$. Then there exists \emph{no} \pfa~$A$ with $asc(A)=m$
  such that~$\asc(A_R)=1$.
\end{lem}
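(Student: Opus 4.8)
The plan is to argue by contradiction, exploiting the fact that the construction $(\cdot)_R$ is essentially an involution at the level of accepting states. Throughout I take $A$ to be a \emph{minimal} \pfa\ with $\asc(A)=m\geq 2$, so that the a-minimality result $\asc(L)=\asc(A)$ for minimal \dfa s gives $\asc(L(A))=m$; here the hypothesis $\asc(A_R)=1$ is read as the statement that the reversed language $L(A)^R$ has accepting state complexity one. Note first that $m\geq 2$ forces $F_A\neq\emptyset$, hence $L(A)\neq\emptyset$ and $L(A)^R\neq\emptyset$, so the accepting state complexity in question is genuinely one and not zero. The whole proof then amounts to turning a one-accepting-state device for $L(A)^R$ back into a one-accepting-state device for $L(A)=(L(A)^R)^R$, which contradicts $\asc(L(A))=m\geq 2$.

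In detail, suppose $\asc(L(A)^R)=1$. Since $A_R$ is a \pfa\ accepting $L(A)^R$ (Lemma~\ref{lem:AR-pfa-property}), the language $L(A)^R$ is p-regular, and therefore its minimal \dfa~$M$ is again a permutation automaton: its transition monoid is the syntactic monoid, which is a group, so every letter acts as a permutation. By the a-minimality result, $M$ realises the accepting state complexity of $L(A)^R$, so $M$ has exactly one accepting state, i.e.\ $|F_M|=1$. Now I apply the reversal construction to $M$. By definition the states of $M_R$ are the size-$|F_M|=1$ subsets of $Q_M$, that is, the singletons, and the accepting states of $M_R$ are precisely those singletons that contain the initial state $s_M$ of $M$; there is exactly one such, namely $\{s_M\}$. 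Hence $M_R$ is a \dfa\ with a single accepting state accepting $L(M)^R=(L(A)^R)^R=L(A)$. Consequently $\asc(L(A))\leq 1$, contradicting $\asc(L(A))=m\geq 2$ and proving the lemma.

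The step I expect to be the crux is the passage from $\asc(L(A)^R)=1$ to a \emph{permutation} automaton $M$ with one accepting state, because the construction $A\mapsto A_R$ is defined only for \pfa s -- it uses the uniquely determined inverse mappings $w^{-1}$, which exist exactly because the input is a permutation automaton. The cleanest justification is that $L(A)^R$ is p-regular (it is accepted by the \pfa~$A_R$), so its syntactic monoid is a group and its minimal \dfa\ is a \pfa. Alternatively one verifies directly that the Myhill--Nerode quotient of the \pfa~$A_R$ is still a \pfa: a letter acting injectively on the states of $A_R$ continues to act injectively, hence bijectively, on the Nerode classes, using that the Nerode congruence is preserved by the inverse of each letter-map, which exists because $A_R$ is a \pfa. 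Once $M$ is known to be a \pfa, the remaining accounting of the accepting states of $M_R$ is immediate from the definition of $(\cdot)_R$ and requires no computation.
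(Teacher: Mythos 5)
Your argument is correct. Since the paper defers its own proof of this lemma to the full version, I can only compare against the machinery it sets up: your route is a clean involution argument that from $\asc(L(A)^R)=1$ passes to the minimal \dfa~$M$ of $L(A)^R$, observes that $M$ is a \pfa\ with a single accepting state, and applies $(\cdot)_R$ once more to get a one-accepting-state \dfa\ for $(L(A)^R)^R=L(A)$, contradicting $\asc(L(A))=m\geq 2$. This uses Lemma~\ref{lem:AR-pfa-property} (so that $L(A)^R$ is p-regular) but not Lemma~\ref{lem:number-AR}, which the paper proves immediately beforehand and may well lean on for a counting argument over the reachable accepting states of $A_R$; which of the two the authors intended cannot be determined from this version. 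The one step that genuinely needs care---and is not covered by Lemma~\ref{lem:AR-pfa-property}, which speaks only about $A_R$ itself---is that the \emph{minimal} \dfa\ of a p-regular language is again a \pfa, so that $(\cdot)_R$ may legitimately be applied to $M$; you identify this as the crux and justify it correctly, both via the syntactic monoid being a group and via the direct check that each letter, having finite order as a permutation of the reachable states, acts injectively on the Nerode classes. Your reading that $A$ must be minimal (so that $\asc(A)=m$ entails $\asc(L(A))=m$) and that $\asc(A_R)=1$ means the reversed language has accepting state complexity one is also the right interpretation of the statement, without which the lemma would fail on trivial non-minimal examples.
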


The result of the previous lemma proves the inequality statement of
our main theorem for the accepting state complexity of the reversal
operation of p-regular languages. Obviously we also prove that for~$m$
equal two every number unequal one is not magic.  We do this by
constructing an automaton~$A$ such that~$A_R$ has~$\alpha\cdot k\div
m=\binom{k}{m}$ initially reachable states while every state of~$A$
appears in exactly~$\alpha$ of them.

\begin{thm}\label{thm:reversal}
  We have
		$$g_{R,\pfa}^{\asc}(m)=
		\begin{cases}
                  \{0\} & \mbox{if $m=0$,}\\
                  \{1\} & \mbox{if $m=1$,}\\
                  \N_{\geq 2} & \mbox{if $m=2$,}
		\end{cases}
		$$
		and $ g_{R,\pfa}^{\asc}(m)\neq \N$ if $m\geq 3$.
                Therefore~$g_{R,\pfa}^{\asc}(m)\neq g_{R}^{\asc}(m)$.
\end{thm}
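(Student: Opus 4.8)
The plan is to treat the five regimes ($m=0$, $m=1$, $m=2$, $m\geq 3$, and the comparison) separately, relying throughout on two background facts about p-regular languages: the minimal \dfa\ of a p-regular language is itself a \pfa\ (its transition monoid is the syntactic monoid, which is a group, so every letter acts as a permutation), and the class of p-regular languages is closed under reversal (the syntactic monoid of $L^R$ is the opposite monoid). The crucial link is that for a minimal \pfa~$A$ the \dfa~$A_R$ is exactly the accessible part of the subset construction applied to the reversal \nfa\ of~$A$; since $A$ is minimal this determinized reversal is itself minimal (a Brzozowski-type argument), so $\asc(L(A)^R)=\asc(A_R)$, and by Lemma~\ref{lem:number-AR} this common value equals the number~$x$ of initially reachable states of~$A_R$ that contain~$q_0$. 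With this identity in hand the easy cases fall out: for $m=0$ the only language is~$\emptyset$, whose reversal is~$\emptyset$, giving~$\{0\}$; for $m=1$ a minimal \pfa\ has $|F_A|=1$, so every initially reachable state of~$A_R$ is a singleton, forcing $x=1$, and since $L^R\neq\emptyset$ we obtain $\asc(L^R)=1$, i.e.\ $g^{\asc}_{R,\pfa}(1)=\{1\}$.

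Next I would show that $1$ is magic for every $m\geq 2$. Let $A$ be the minimal \pfa\ of a p-regular language~$L$ with $\asc(L)=\asc(A)=m\geq 2$. By the identity above $\asc(L^R)=\asc(A_R)$, and Lemma~\ref{lem:reversal-number-one} asserts $\asc(A_R)\neq 1$; moreover $\asc(L^R)\geq 1$ since $L^R\neq\emptyset$, so $0\notin g^{\asc}_{R,\pfa}(m)$ as well. Hence $1\notin g^{\asc}_{R,\pfa}(m)$, which at once yields $g^{\asc}_{R,\pfa}(m)\neq\N$ for all $m\geq 3$ and excludes both~$0$ and~$1$ for $m=2$. (A cleaner route that sidesteps the minimality of~$A_R$ uses that reversal is an involution on p-regular languages, whence $\alpha\in g^{\asc}_{R,\pfa}(m)$ iff $m\in g^{\asc}_{R,\pfa}(\alpha)$; then $1\in g^{\asc}_{R,\pfa}(m)$ iff $m\in g^{\asc}_{R,\pfa}(1)=\{1\}$, i.e.\ iff $m=1$.)

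It then remains to prove $\N_{\geq 2}\subseteq g^{\asc}_{R,\pfa}(2)$. Given $\alpha\geq 2$, set $k=\alpha+1\geq 3$ and take the \pfa~$A$ on states $\{0,1,\ldots,k-1\}$ with initial state~$0$, accepting set $F_A=\{0,1\}$ (so $\asc(A)=2$), and two letters realizing the $k$-cycle $i\mapsto i+1\bmod k$ and the transposition of~$0$ and~$1$; these generate the full symmetric group~$S_k$. Transitivity of~$S_k$ makes all states reachable and $2$-transitivity makes them pairwise distinguishable, so $A$ is minimal. In~$A_R$ the initially reachable states form the orbit of $F_A=\{0,1\}$ under $S_k$ acting on $2$-subsets, which is \emph{all} $\binom{k}{2}$ two-element subsets since $S_k$ is transitive on $2$-subsets; the accepting ones are precisely those containing~$0$, of which there are $k-1=\alpha$. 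A short transitivity argument (send some point of $R\setminus R'$ to~$0$) shows distinct $2$-subsets are distinguishable, so $A_R$ is minimal and $\asc(L(A)^R)=k-1=\alpha$. This realizes every $\alpha\geq 2$, and together with the previous paragraph gives $g^{\asc}_{R,\pfa}(2)=\N_{\geq 2}$.

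Finally I would assemble the four regimes into the stated case distinction and read off the comparison. The general result gives $g^{\asc}_{R}(m)=\N$ for every $m\geq 1$, whereas we have shown $g^{\asc}_{R,\pfa}(1)=\{1\}$, $g^{\asc}_{R,\pfa}(2)=\N_{\geq 2}$, and $g^{\asc}_{R,\pfa}(m)\neq\N$ for $m\geq 3$; in each case the \pfa\ value differs from~$\N$, so $g^{\asc}_{R,\pfa}(m)\neq g^{\asc}_{R}(m)$ for all $m\geq 1$. I expect the main obstacle to be the minimality of the determinized reversal, which is what turns the automaton-level Lemma~\ref{lem:reversal-number-one} into a statement about $\asc(L^R)$; the remaining work is the group-theoretic bookkeeping establishing that the orbit of~$F_A$ exhausts $\binom{k}{2}$ and that the resulting $2$-subset states are pairwise distinguishable.
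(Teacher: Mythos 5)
Your proposal is correct and follows essentially the same route the paper sketches: Lemma~\ref{lem:reversal-number-one} (equivalently your involution observation) rules out $\alpha=1$ for $m\geq 2$, and for $m=2$ you realize every $\alpha\geq 2$ by a \pfa\ whose letters generate the symmetric group, so that all $\binom{k}{2}$ two-element subsets are reachable in $A_R$ with each state occurring in exactly $\alpha=k-1$ of them, exactly the $\alpha\cdot k\div m=\binom{k}{m}$ scheme the paper describes. The Brzozowski-type minimality of $A_R$ and the fact that the minimal \dfa\ of a p-regular language is again a \pfa\ are precisely the glue needed to pass from $\asc(A_R)$ to $\asc(L(A)^R)$, so nothing essential is missing.
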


We note here that for~$m\geq 3$ the equation~$\alpha\cdot k\div
m=\binom{k}{m}$ has no integer solution for many values of~$\alpha$
and~$m$ which can be easily confirmed. For those values of $\alpha$
and~$m$ for which the equation has an integer solution we
obtain~$\alpha\in g_{R,\pfa}^{\asc}(m)$ in similiar fashion like
for~$m=2$. Nevertheless we conjecture the following:

\begin{con}\label{con:reversal}
  We have
$$g_{R,\pfa}^{\asc}(m)=
\begin{cases}
  \{0\} & \mbox{if $m=0$,}\\
  \{1\} & \mbox{if $m=1$,}\\
  \N_{\geq 2} & \mbox{if $m\geq2$.}
\end{cases}
$$
\end{con}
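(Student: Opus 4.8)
The plan is to reduce the statement to a purely combinatorial existence problem about group orbits, and then to attack that problem with explicit witnesses.

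\emph{Step 1 (a closed form for $\asc(L(A)^R)$).} Let $A$ be a minimal \pfa\ with $|F_A|=m$ and let $G$ be its transition group acting on $Q_A$; since $A$ is a \pfa\ this transition monoid is in fact a group, and since every state of a minimal automaton is reachable from $q_0$, the orbit of $q_0$ is all of $Q_A$, i.e.\ $G$ is transitive. Reading words in $A_R$ applies precisely the elements of $G$ to the start set $F_A$, so the initially reachable states of $A_R$ are exactly the orbit $\mathcal{O}=\{\,F_A\cdot g\mid g\in G\,\}$, and by Lemma~\ref{lem:AR-pfa-property} this reachable part $A_R'$ is again a \pfa. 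I claim $A_R'$ is \emph{minimal}: two distinct reachable sets $S\neq S'$ are separated by picking $q\in S\triangle S'$ and, using transitivity of $G$, an element $g$ with $q\cdot g=q_0$, which sends exactly one of $S,S'$ to an accepting (that is, $q_0$-containing) state. Hence by the a-minimality theorem ($\asc(L)=\asc(A)$ for minimal~$A$) we get $\asc(L(A)^R)=\asc(A_R')=|\{\,S\in\mathcal{O}\mid q_0\in S\,\}|=x$, where $x$ is the common covering number of Lemma~\ref{lem:number-AR}; by double counting incidences, $x=m\,|\mathcal{O}|/|Q_A|$.

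\emph{Step 2 (reformulation).} Combining Step~1 with $\asc(A)=m$, Conjecture~\ref{con:reversal} for $m\geq3$ is equivalent to: for every $m\geq3$ and every $\alpha\geq2$, exhibit a transitive group $G$ on a finite set $Q$ together with an $m$-subset $F$ whose $G$-orbit covers each point exactly $\alpha$ times (a transitive $1$-design with block size $m$ and replication $\alpha$), such that the associated \pfa\ is minimal. The containment $g_{R,\pfa}^{\asc}(m)\subseteq\N_{\geq2}$ is immediate: $\alpha=0$ forces $L(A)=\emptyset$ and hence $m=0$, while $\alpha=1$ is excluded by Lemma~\ref{lem:reversal-number-one}; the cases $m\leq2$ are Theorem~\ref{thm:reversal}. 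Minimality of $A$ is only a side condition: the relation $p\sim q$ defined by ``$p\cdot g\in F\Leftrightarrow q\cdot g\in F$ for all $g\in G$'' is a $G$-congruence, so it is trivial whenever $G$ is primitive and $\emptyset\neq F\neq Q$, and in any case it is verified directly in each construction.

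\emph{Step 3 (constructions).} Several families realise large sets of replication numbers. Taking $G=\mathrm{Sym}(Q)$ gives the complete orbit and $\alpha=\binom{k-1}{m-1}$ with $k=|Q|$. A cyclic group $\Z_k$ with $F$ of stabiliser order $d\mid m$ gives $\alpha=m/d$, that is, the divisors of $m$; adjoining a multiplier ($G=\Z_k\rtimes\Z_r$) that fuses $r$ such orbits yields the values $(m/d)\,r$ for a proper divisor $d\mid m$ and suitable $r\geq1$. The action of $\mathrm{Sym}(n)$ on $Q=\binom{[n]}{t}$ with $F$ the star of all $t$-sets through a fixed $s$-set gives $\alpha=\binom{t}{s}$ while $m=\binom{n-s}{t-s}$; for instance $n=4,\ t=2,\ s=1$ realises the otherwise missing value $\alpha=2$ for $m=3$. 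The plan is then to interpolate between these by overlaying group orbits on a common point set: since the replication of any orbit is $m\,|\mathcal{O}|/|Q|$, replication numbers are additive under overlay, and fusing several orbit types into one orbit by a normalising permutation—while choosing $|Q|$ to be a convenient multiple of $m$ for divisibility room—aims to hit an arbitrary target $\alpha$.

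\emph{Main obstacle.} The crux is realising \emph{every} integer $\alpha\geq2$, uniformly in $m\geq3$, as $m\,|\mathcal{O}|/|Q|$ for an actual single group orbit of block size exactly $m$. The symmetric, cyclic, and Johnson families each hit only sparse arithmetic or binomial sequences and leave residue gaps (e.g.\ for $m=4$ the cyclic constructions miss every odd replication number), so closing all gaps requires either primitive groups with finely tuned orbit sizes or the overlay/fusion idea above, whose correctness—that the fused family really is one orbit, of block size exactly $m$, with the prescribed replication, and with $A$ still minimal—must be established in general. Proving that no gaps survive, thereby upgrading the authors' exhaustive-search evidence to all parameters, is precisely the open step and the reason the statement is phrased as Conjecture~\ref{con:reversal}.
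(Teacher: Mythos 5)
You are attempting to prove Conjecture~\ref{con:reversal}, and it is important to be clear that the paper itself contains \emph{no} proof of this statement: it proves only the cases $m\leq 2$ and the exclusion of $\alpha\in\{0,1\}$ (Lemma~\ref{lem:reversal-number-one}, Theorem~\ref{thm:reversal}), and remarks that its construction method only realises those $\alpha$ satisfying $\alpha\cdot k\div m=\binom{k}{m}$, i.e.\ $\alpha=\binom{k-1}{m-1}$, which is a sparse set for $m\geq 3$. Your Steps 1 and 2 are sound and essentially make explicit the framework behind the paper's partial results: since the transition monoid $G$ of a minimal \pfa\ is a transitive permutation group, the reachable part of $A_R$ is the $G$-orbit $\mathcal{O}$ of $F_A$; your separation argument for its minimality is valid precisely because each $g\in G$ is a bijection, so $g^{-1}(q_0)=q$ is unique and $q_0\in S\cdot g\setminus S'\cdot g$; combining with Lemma~\ref{lem:number-AR} and double counting gives $\asc(L(A)^R)=m\,|\mathcal{O}|/|Q_A|$. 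The design-theoretic reformulation and the Johnson-star witness realising $(m,\alpha)=(3,2)$ --- a value the paper's binomial equation cannot produce --- genuinely go beyond what the paper records.

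Nevertheless the proposal is not a proof, and you concede this yourself. The entire burden of the conjecture is Step 3, and there the argument breaks down: the explicit families (symmetric, cyclic, cyclic-with-multiplier, Johnson) hit only sparse arithmetic and binomial sequences, and the ``overlay/fusion'' idea meant to fill the gaps is unsubstantiated. In particular, replication is additive over \emph{unions} of orbits, but a union of orbits is not an orbit, and your closed form $\asc(L(A)^R)=m\,|\mathcal{O}|/|Q_A|$ applies only to the single orbit of $F_A$; to fuse several orbit types into one orbit you must enlarge the group, and you never show that the fused family is a single orbit of blocks of size exactly $m$, that the resulting replication is the intended sum, or that minimality of $A$ survives the fusion. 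This missing step is exactly why the statement is posed as a conjecture; what you have is a correct reduction to a combinatorial existence problem (transitive $1$-designs with prescribed block size and replication, subject to a distinguishability condition) plus additional evidence, not a proof of the statement.
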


Clearly this would mean that~$\alpha=1$ is the only number which is
magic for the reversal of p-regular languages and non-magic for
arbitrary regular languages.

\subsection{Quotient}

For two \dfa s~$A$ and~$B$ the right quotient~$L(A)L(B)^{-1}$ can be
accepted by the \dfa~$\tilde{A}$ which can be obtained from~$A$ by
exchanging its set of accepting states~$F$ by~$\{\,q\mid q\cdot w \in
F\mbox{ for some $w$ in $L(B)$}\,\}$, which we denote
by~$\tilde{F}$. It is obvious that~$\tilde{A}$ is a \pfa\ if~$A$ is a
\pfa.  Additionally, if $s$ is the initial state of $A$, then the
automaton obtained from $A$ by making all states in $\{\,s\cdot w\mid
w\in L\,\}$ initial accepts~$L(B)^{-1}L(A)$.  Since for unary
languages the left and right quotient coincide no distinction is made
at this point and we use the right quotient unless otherwise stated.
For regular languages in general the following was shown
in~\cite{HoHo18}:
$$
g^{\asc,u}_{^{-1}}(m,n)=g^{\asc}_{^{-1}}(m,n)=
\begin{cases}
  \{0\} & \mbox{if $m=0$ or~$n=0$,}\\
  \{0\}\cup\N & \mbox{otherwise.}
\end{cases}
$$

Clearly the first statement follows directly from the fact
that~$K\emptyset^{-1}=\emptyset L^{-1}=\emptyset$ for all
languages~$K$ and~$L$.  But we show that last statement does not hold
for the class of p-regular languages.  For this we distinguish
whether~$n$ is equal one or at least equal to two.  First we show
which numbers are reachable if~$n$ equals one.

\begin{lem}\label{lem:quotient-n-equal-one-construct}
  We have~$[1,m]\subseteq g_{^{-1},\pfa}^{\asc,u}(m,1)$.
\end{lem}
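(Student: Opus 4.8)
The plan is to treat each target value $\alpha\in[1,m]$ separately, since the lemma only asserts the existence of witnesses $A$ and $B$ for each such $\alpha$; hence $A$ may depend on $\alpha$. I would work with the cyclic description of unary \pfa s from Lemma~\ref{lem:pfa-language}: a minimal unary \pfa\ of cycle length $M$ with accepting positions $I\subseteq[0,M-1]$ accepts $\bigcup_{i\in I}a^i(a^M)^*$, and I take $B$ to have a single accepting state so that $\asc(B)=1$. The key preliminary observation is how the right-quotient construction $\tilde A$ (which replaces the accepting set $F$ by $\tilde F=\{\,q\mid q\cdot w\in F$ for some $w\in L(B)\,\}$) acts on such cyclic automata. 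If $B$ has cycle length $\ell$ and its single accepting state is reached by $a^j$, then $L(B)=a^j(a^\ell)^*$, and position $i$ of $A$ lies in $\tilde F$ iff $(i+j+t\ell)\bmod M\in I$ for some $t\ge 0$. Since $\{(i+j+t\ell)\bmod M\mid t\ge0\}$ is exactly the set of elements of $[0,M-1]$ congruent to $i+j$ modulo $d:=\gcd(\ell,M)$, this says $i\in\tilde F$ iff $(i+j)\bmod d\in(I\bmod d)$. Thus the quotient language is periodic with period $d$, and its ``accepting residues'' form the shifted set $(I\bmod d)-j$ in $\Z/d\Z$.

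For a fixed $\alpha\in[1,m]$ I would set $d:=\alpha+1$, take $B=A_{w'}$ with $w'=10^{\alpha}$ (cycle length $\ell=d$, one accepting state, so $\asc(B)=1$), and arrange $\ell=d\mid M$ so that $\gcd(\ell,M)=d$. For $A$ I would use cycle length $M:=m(\alpha+1)$ and accepting set
\[
I=\{0,1,\ldots,\alpha-1\}\cup\{\,k(\alpha+1)\mid 1\le k\le m-\alpha\,\},
\]
which has exactly $m$ elements, all lying in residues $\{0,1,\ldots,\alpha-1\}$ modulo $d$ and realising each of these residues at least once; in particular $I\bmod d=\{0,1,\ldots,\alpha-1\}$. (For $\alpha=m$ the second block is empty and $I=\{0,\ldots,m-1\}$.) I would then verify that $A$ is minimal via Lemma~\ref{lem:pfa-language}: any proper period $t\mid M$ would force $I$ to be a union of $\langle t\rangle$-cosets, but the residues $1,\ldots,\alpha-1$ each occur exactly once in $I$, which is impossible unless $t=M$. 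Hence $\asc(A)=m$.

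Taking $j=0$, the preliminary observation gives $\tilde F=\{\,i\in[0,M-1]\mid i\bmod d\in\{0,\ldots,\alpha-1\}\,\}$, so $\tilde A$ accepts $\bigcup_{r=0}^{\alpha-1}a^r(a^{\alpha+1})^*$. By Lemma~\ref{lem:pfa-language} this language is accepted by the cyclic \pfa\ of length $\alpha+1$ with accepting set $\{0,\ldots,\alpha-1\}$, and this automaton is minimal: for every proper divisor $t$ of $\alpha+1$ the position $i=\alpha-t$ lies in $\{0,\ldots,\alpha-1\}$ while $i+t\equiv\alpha$ does not, so no proper period preserves the accepting set. By the a-minimality criterion the accepting state complexity of the quotient equals $\alpha$, which is the desired value.

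The main obstacle I anticipate is the collapse phenomenon: a priori the accepting-residue set could have a period properly dividing $d$, in which case the minimal automaton of the quotient would have fewer than $\alpha$ accepting states. I avoid this by making $I\bmod d$ a block of $\alpha$ consecutive residues inside $\Z/d\Z$ with $d=\alpha+1>\alpha$, i.e.\ a proper nonempty subset missing exactly one residue, which provably has full period $d$. The remaining work is routine bookkeeping: confirming the cardinality and distinctness of $I$, the minimality of $A$ and of the quotient automaton, and checking the extremal cases $\alpha=1$ and $\alpha=m$ against the same formulas.
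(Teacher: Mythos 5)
There is a genuine gap at $\alpha=1$ (for $m\geq 2$). With your parameters, $A$ has cycle length $M=m(\alpha+1)=2m$ and accepting set $I=\{0\}\cup\{2k\mid 1\le k\le m-1\}=\{0,2,4,\ldots,2m-2\}$, which is \emph{exactly} the set of all even positions of the cycle. Hence $t=2$ is a proper divisor of $M$ with $I+t\equiv I\pmod{M}$, so by the criterion of Lemma~\ref{lem:pfa-language} your $A$ is not minimal: it accepts $(a^2)^*$, whose accepting state complexity is $1$, not $m$. Your minimality argument (``the residues $1,\ldots,\alpha-1$ each occur exactly once in $I$'') is vacuous when $\alpha=1$ because that range of residues is empty, and the defect cannot be repaired inside your parameter choice: a cycle of length $2m$ has only $m$ even positions, so any $m$-element accepting set contained in the residue class $0$ modulo $2$ must be all of that class and is therefore periodic. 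You need a strictly longer cycle here; this is precisely why the paper's witness $w=1^\alpha0^{m+1-\alpha}(10^{m})^{m-\alpha}0^{m+1}$ pads with a trailing block $0^{m+1}$ of non-accepting states, giving cycle length $(m+1)(m-\alpha+2)$ so that the accepting positions never fill an entire residue class. So the claim at the end that the extremal case $\alpha=1$ ``checks out against the same formulas'' is false as stated.

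For $\alpha\geq 2$ your construction is correct, although the minimality argument for $A$ is compressed: a period $t$ does not immediately clash with the residues $1,\ldots,\alpha-1$ occurring once each, because the $\langle t\rangle$-coset of the element $1$ need not stay inside one residue class; you first have to argue that shift-invariance of the residue \emph{multiplicities} (residue $0$ has multiplicity $m-\alpha+1$, residue $\alpha$ has multiplicity $0$, the rest multiplicity $1$) forces $(\alpha+1)\mid t$, and only then does the unique element of residue $1$ give the contradiction. Your computation of $\tilde F$ via $\gcd(\ell,M)$ and the identification of the minimal quotient automaton $A_{1^\alpha 0}$ are both sound, and the overall strategy---explicit cyclic witnesses plus a direct computation of the shifted and replicated accepting set---is essentially the paper's, which differs only in placing the accepting state of $B$ at position $1$ and in using the longer cycle for $A$. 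The proof is therefore salvageable by handling $\alpha=1$ separately or by enlarging $M$, but as written it does not establish the full range $[1,m]$.
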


\begin{proof}
  Let~$\alpha$ be in $[1,m]$. Define~$A=A_w$ and~$B=A_{w'}$
  for~$w=1^\alpha0^{m+1-\alpha}(10^{m})^{m-\alpha}0^{m+1}$
  and~$w'=010^{m-1}$. We observe that~$A$ has
  length~$(m+1)(m-\alpha+2)$ and~$B$ has length~$m+1$. It is not hard
  to see that
  \[
  L(A)=\{\,a^{i+x(m+1)(m-\alpha+2)}\mid 0\leq i\leq\alpha-1,0\le
  x\,\}\cup\{\,a^{(m+1)i+x(m+1)(m-\alpha+2)}\mid 1\leq i\leq
  m-\alpha,0\le x\,\}\] and
  \[L(B)=\{\,a^{(m+1)i+1}\mid i\in\N\cup\{0\}\,\},\] for~$a$ being the
  letter of the input alphabet of~$A$ and~$B$. We observe that the
  \pfa~$\tilde{A}$ has the set of accepting states~$\tilde{F}$ that
  contains exactly the elements $q_{(i-((m+1)j+1))\bmod
    (m+1)(m-\alpha+2)}$, for $0\leq i\leq \alpha-1$ or
  $i\in\{\,(m+1)\ell\mid 1\leq \ell\leq m-\alpha\,\}$ and
  $j\in\N\cup\{0\}$. Alternatively we can write
  \[ \tilde{F}=\{\, q_{(i-((m+1)j+1))\bmod (m+1)(m-\alpha+2)}\mid
  0\leq i\leq \alpha-1, j\in\N\cup\{0\} \,\} \] because
  \[\displaylines{\quad 0-((m+1)(m-\alpha+2-j)+1) \bmod
    (m+1)(m-\alpha+2)\hfill\cr \hfill{}= (m+1)j-((m+1)\cdot 0)+1 \bmod
    (m+1)(m-\alpha+2)\quad\cr}\]
  holds.  One may observe that~$\tilde{F}$ contains all accepting
  states of~$A$ but their index is decreased by one modulo the length
  of~$A$. If we shift those states again by an arbitrary multiple
  of~$(m+1)$ we obtain the remaining states in~$\tilde{F}$.
  Clearly~$\tilde{F}$ does not contain other
  states. Therefore~$\tilde{A}=A_{w''}$,
  for~$w''=1^{\alpha-1}0^{m+1-\alpha}(1^{\alpha}0^{m+1-\alpha})^{m-\alpha+1}1$. Indeed
  this \pfa\ is not minimal, i.e., all of its sequences
  $1^{\alpha-1}0^{m+1-\alpha}1$ are equivalent. Thus the minimal \pfa\
  accepting the language~$L(A)L(B)^{-1}$ is~$A_{w'''}$, for the
  word~$w'''=1^{\alpha-1}0^{m+1-\alpha}1$ which has accepting state
  complexity~$\alpha$.
\end{proof}

Next we prove that every number which is not reachable due to the
previous lemma is magic.

\begin{lem}\label{lem:quotient-n-equal-one-magic}
  We have~$[1,m]= g_{^{-1},\pfa}^{\asc,u}(m,1)$.
\end{lem}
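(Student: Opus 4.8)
Since Lemma~\ref{lem:quotient-n-equal-one-construct} already gives $[1,m]\subseteq g_{^{-1},\pfa}^{\asc,u}(m,1)$, the plan is to prove only the reverse inclusion $g_{^{-1},\pfa}^{\asc,u}(m,1)\subseteq[1,m]$. As every unary p-regular language is accepted by a minimal \dfa\ that is itself a \pfa, I would fix arbitrary minimal unary \pfa s~$A$ and~$B$ with $\asc(A)=m$ and $\asc(B)=1$ and bound the accepting state complexity of the quotient~$L(A)L(B)^{-1}$ from both sides. Let~$A$ be a cycle of length~$k$ with the set~$I\subseteq[0,k-1]$ of accepting positions, so that $|I|=m$, and let the unique accepting state of~$B$ sit at position~$i_0$ of a cycle of length~$\ell$, whence $L(B)=a^{i_0}(a^{\ell})^*$.

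The central step is to describe the accepting set~$\tilde F$ of the \pfa~$\tilde A$ that accepts the quotient. Since~$\tilde A$ reuses the cycle of~$A$, a state~$q_j$ belongs to~$\tilde F$ if and only if $(j+i_0+x\ell)\bmod k\in I$ for some~$x\geq 0$. Writing~$d=\gcd(k,\ell)$, the set $\{\,x\ell\bmod k\mid x\geq 0\,\}$ is exactly the subgroup of multiples of~$d$ in~$\Z/k\Z$, so membership of~$q_j$ in~$\tilde F$ depends only on $(j+i_0)\bmod d$; putting $R=\{\,f\bmod d\mid f\in I\,\}$ one obtains $\tilde F=\{\,q_j\mid (j+i_0)\bmod d\in R\,\}$. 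In particular~$\tilde F$ is invariant under the shift by~$d$ and contains exactly~$|R|$ accepting states in every block of~$d$ consecutive positions, so $|\tilde F|=|R|\cdot(k/d)$.

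It remains to read off the accepting state complexity from this periodic picture. The shifts fixing~$\tilde F$ form a subgroup of~$\Z/k\Z$, hence are the multiples of a single divisor~$p$ of~$k$, and since~$d$ fixes~$\tilde F$ we have~$p\mid d$. Minimizing~$\tilde A$ by the criterion of Lemma~\ref{lem:pfa-language} therefore collapses the cycle to length~$p$ and leaves exactly~$|R|\cdot(p/d)$ accepting states, which equals $\asc(L(A)L(B)^{-1})$. From $m\geq 1$ we get $I\neq\emptyset$, hence $R\neq\emptyset$ and this count is at least~$1$; and from $p\mid d$ together with $|R|\leq|I|=m$ we get $|R|\cdot(p/d)\leq|R|\leq m$. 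Thus every reachable value lies in~$[1,m]$, and with Lemma~\ref{lem:quotient-n-equal-one-construct} the asserted equality follows. I expect the main obstacle to be the bookkeeping modulo~$d$ — verifying that the reachable shifts $x\ell\bmod k$ are precisely the multiples of~$d$ and that~$\tilde F$ has the claimed period — together with the clean passage from the generally non-minimal \pfa~$\tilde A$ to its minimal equivalent, for which Lemma~\ref{lem:pfa-language} supplies the appropriate minimality test.
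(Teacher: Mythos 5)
Your proposal is correct and follows essentially the same route as the paper: both reduce the claim to the reverse inclusion, describe the accepting set~$\tilde F$ of the quotient automaton~$\tilde A$, and use the invariance of~$\tilde F$ under the shift induced by the cycle length of~$B$ to collapse the accepting states into at most~$m$ equivalence classes. Your version is merely more quantitative (working through $d=\gcd(k,\ell)$ and the exact count $|R|\cdot p/d$), whereas the paper argues directly that each orbit $q_{(i-(jk'+\ell))\bmod k}$, $j\geq 0$, forms a single equivalence class, one per accepting state of~$A$.
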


\begin{proof}
  Due to the proof of Lemma~\ref{lem:quotient-n-equal-one-construct}
  it remains to show that $\N_{\geq m+1}$ is not
  in~$g_{^{-1},\pfa}^{\asc,u}(m,1)$.  Therefore let~$A$ and~$B$ be
  unary minimal \pfa s with~$m$ and~$n$ accepting states, \resp.
  Recall that~$\tilde{A}$ is the \pfa\ obtained from~$A$ by replacing
  its set of accepting states with~$\tilde{F}=\{\,q\mid q\cdot w \in
  F\mbox{ for some $w$ in $L(B)$}\,\}$.  We observe that the set of
  accepting states of~$\tilde{A}$ is equal
  \[\tilde{F}=\{\, q_{(i-(jk'+\ell))\bmod k}\mid i\in I_A,
  j\in\N\cup\{0\} \,\}\] for~$I_A$ being the index set of the
  accepting states of~$A$, $q_{\ell}$ being the accepting state of~$B$
  and~$k, k'$ being the number of states of~$A$ and~$B$, \resp.  For
  an arbitrary but fixed~$i\in I_A$ we see that each of the states
  \[
  q_{(i-(0k'+\ell))\bmod k},q_{(i-(1k'+\ell))\bmod k},
  q_{(i-(2k'+\ell))\bmod k},\ldots
  \]
  can be mapped by~$a^{k'}$ onto its predecessor.  Since this holds
  for every~$i\in I_A$ those states have to be equivalent which proves
  that~$\tilde{A}$ contains at most~$m$ inequivalent accepting states.
\end{proof}

Now we generalize Lemma~\ref{lem:quotient-n-equal-one-construct},
for~$n\geq 2$.

\begin{lem}\label{lem:quotient-n-equal-one-construct-geq2}
  We have~$[1,mn]\subseteq g_{^{-1},\pfa}^{\asc,u}(m,n)$, for~$n\geq
  2$.
\end{lem}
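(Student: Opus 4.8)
The plan is to produce, for every $\alpha\in[1,mn]$, minimal unary \pfa s $A$ and $B$ with $\asc(A)=m$ and $\asc(B)=n$ whose right quotient has accepting state complexity exactly $\alpha$, thereby generalizing the construction of Lemma~\ref{lem:quotient-n-equal-one-construct} from a single accepting state of $B$ to $n$ of them. First I would record the structure of the quotient automaton for unary \pfa s. If $A$ is a cycle of length $k$ with accepting index set $I_A$ and $B$ is a cycle of length $k'$ with accepting index set $I_B$, then, writing $d=\gcd(k,k')$, the same computation as in the proof of Lemma~\ref{lem:quotient-n-equal-one-construct} shows that the accepting index set of $\tilde A$ is a union of full residue classes modulo $d$, namely all $i$ with $(i\bmod d)\in R$, where $R=(I_A-I_B)\bmod d=\{\,(a-b)\bmod d\mid a\in I_A,\ b\in I_B\,\}$. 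Consequently $L(A)L(B)^{-1}=\bigcup_{r\in R}a^{r}(a^{d})^{*}$, and by Lemma~\ref{lem:pfa-language} its minimal \pfa\ has length equal to the least period of $R$ modulo $d$ and accepting state complexity equal to the number of residues of $R$ within one period.

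Next I would fix the scales so that this count is controlled. Choose a prime $d>mn$ and put $k=de_1$, $k'=de_2$ with $\gcd(e_1,e_2)=1$ and $e_1,e_2$ large; then $\gcd(k,k')=d$. Because $d$ is prime and $R$ will be a non-empty proper subset (its size is at most $mn<d$), the least period of $R$ is automatically $d$, so the minimal \pfa\ for the quotient has length $d$ and accepting state complexity exactly $|R|$; hence it suffices to realise $|R|=\alpha$. Writing $X=I_A\bmod d$ and $Y=I_B\bmod d$ we have $R=X-Y$, so I would pick residue sets $X,Y\subseteq[0,d-1]$ with $|X|=m'\le m$, $|Y|=n'\le n$ and $|X-Y|=\alpha$, place them in the first $d$ positions of the respective cycles, and then pad $I_A$ and $I_B$ with $m-m'$ and $n-n'$ further accepting states sitting on positions congruent modulo $d$ to already chosen residues. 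This keeps $X$ and $Y$ unchanged while forcing $\asc(A)=m$ and $\asc(B)=n$; the padding positions are spread out so that neither $A$ nor $B$ acquires a proper period, i.e.\ both cycles are minimal.

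The combinatorial heart of the argument is the choice of $X$ and $Y$. I would first observe the covering identity $[1,mn]=\bigcup_{1\le m'\le m,\,1\le n'\le n}[m'+n'-1,\,m'n']$: taking $m'=1$ and letting $n'$ vary yields the single points $1,\dots,n$, while fixing $n'=n$ and letting $m'$ run from $1$ to $m$ yields the intervals $[m'+n-1,m'n]$, consecutive ones overlapping since $(m'-1)(n-1)\ge1$ for $n\ge2$, so their union is $[n,mn]$; together these give $[1,mn]$. Thus for any $\alpha$ one can select $m'\le m$ and $n'\le n$ with $m'+n'-1\le\alpha\le m'n'$. It then remains to realise every value in $[m'+n'-1,m'n']$ as a difference-set size $|X-Y|$ with $|X|=m'$ and $|Y|=n'$: taking $Y$ to be an interval of length $n'$ makes $X-Y$ a union of $m'$ length-$n'$ intervals, whose total size runs over the whole range $[m'+n'-1,m'n']$ as the gaps between consecutive points of $X$ are tuned between $1$ and $n'$. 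I expect this last step---pinning down exactly which difference-set sizes are attainable while simultaneously certifying minimality of the padded cycles $A$ and $B$---to be the main obstacle, whereas the reduction to $R=X-Y$ and the period bookkeeping secured by the prime $d$ are routine.
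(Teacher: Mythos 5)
Your argument is correct, and I can verify all of its steps: the reduction of the quotient's accepting set to the full residue classes of $R=(I_A-I_B)\bmod d$ with $d=\gcd(k,k')$ is exactly right for unary \pfa s, the prime choice $d>mn$ forces the least period of the non-empty proper set $R$ to be $d$ so that $\asc$ of the quotient is $|R|$, the covering identity $[1,mn]=\bigcup_{m'\le m,\,n'\le n}[m'+n'-1,m'n']$ holds, and taking $Y$ an interval of length $n'$ and tuning the gaps of $X$ realizes every size in $[m'+n'-1,m'n']$ (your overlap condition should read $(m'-1)(n-1)\geq 0$ rather than $\geq 1$, but adjacency of consecutive intervals suffices, so nothing breaks). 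The paper omits its proof of this lemma, but judging from the $n=1$ case (Lemma~\ref{lem:quotient-n-equal-one-construct}, which hard-codes $d=m+1$ into explicit words $w,w'$) and the structural remark preceding Lemma~\ref{lem:quotient-n-equal-one-magic-geq2}, the authors work with concrete word constructions resting on the same mechanism: shift the accepting positions of $A$ by the $n$ accepting residues of $B$ and replicate modulo the gcd. Your version is more systematic: the large prime modulus buys automatic minimality of the quotient automaton, and the difference-set cardinality argument cleanly isolates the combinatorics. The one step you flag as the main obstacle, minimality of the padded cycles $A$ and $B$, can be discharged without any spreading argument: take $k=de_1$ and $k'=de_2$ with $d,e_1,e_2$ pairwise distinct primes exceeding $mn$; then every nontrivial subgroup of $\Z_k$ (and of $\Z_{k'}$) has order larger than $mn$, so a non-empty accepting set of size $m\le mn$ (respectively $n$) cannot be a union of cosets, and Lemma~\ref{lem:pfa-language} gives minimality outright.
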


Next we rule out every number that is not reachable by
Lemma~\ref{lem:quotient-n-equal-one-construct-geq2}. Like in the proof
of Lemma~\ref{lem:quotient-n-equal-one-magic} one observes that the
set of accepting states of the \dfa\ accepting the quotient language
of two p-regular languages~$L_1$ and~$L_2$ of accepting state
complexity~$m$ and~$n$, \resp, is given by applying the following two
steps. First the accepting states of the minimal \dfa\ accepting~$L_1$
are shifted onto~$n$ positions. Afterwards these~$mn$ accepting states
are cyclic replicated by the length of the \dfa\
accepting~$L_1$. Since the \dfa\ accepting the quotient language is a
\pfa\ all cyclic replications are equivalent.

\begin{lem}\label{lem:quotient-n-equal-one-magic-geq2}
  We have~$[1,mn]= g_{^{-1},\pfa}^{\asc,u}(m,n)$.
\end{lem}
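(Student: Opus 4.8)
The plan is to prove the two inclusions separately. The inclusion $[1,mn]\subseteq g_{^{-1},\pfa}^{\asc,u}(m,n)$ is already available: for $n\geq 2$ it is exactly Lemma~\ref{lem:quotient-n-equal-one-construct-geq2}, and for $n=1$ we have $mn=m$, so it is Lemma~\ref{lem:quotient-n-equal-one-magic}. Hence the whole task reduces to the reverse inclusion $g_{^{-1},\pfa}^{\asc,u}(m,n)\subseteq[1,mn]$, that is, to showing that the value $0$ and every value $\alpha\geq mn+1$ are magic. So let $A$ and $B$ be arbitrary minimal unary \pfa s with $m$ and $n$ accepting states, \resp; write $k$ and $k'$ for their numbers of states, $I_A$ for the index set of the accepting states of $A$, and $I_B$ for that of $B$. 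Recall from the discussion preceding the lemma that $\tilde{A}$ is again a \pfa\ on the same cycle of length $k$ and that its accepting index set is \[\tilde{F}=\{\,(i-\ell-jk')\bmod k\mid i\in I_A,\ \ell\in I_B,\ j\in\N\cup\{0\}\,\}.\]

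First I would rule out $0$. Since $n\geq 1$ the language $L(B)$ is non-empty, so fixing any $\ell\in I_B$ and any $i\in I_A$ the index $(i-\ell)\bmod k$ lies in $\tilde{F}$; thus $\tilde{F}\neq\emptyset$ and the minimal \dfa\ accepting $L(A)L(B)^{-1}$ has at least one accepting state. Hence no reachable value equals $0$, and it remains to establish the upper bound $mn$.

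For the upper bound I would follow the two-step description given before the lemma and make the collapse of the ``cyclic replications'' precise. Let $d=\gcd(k,k')$ and let $H=\{\,jk'\bmod k\mid j\in\N\cup\{0\}\,\}$ be the cyclic subgroup of $\Z/k\Z$ generated by $k'$; it has order $k/d$ and, being a finite group, is closed under negation. Consequently, for each fixed pair $(i,\ell)$ the set $\{\,(i-\ell-jk')\bmod k\mid j\geq 0\,\}$ is precisely the coset $(i-\ell)+H$, so $\tilde{F}$ is a union of at most $mn$ cosets of $H$ and is in particular invariant under translation by every element of $H$. Because $\tilde{A}$ is a \pfa, two of its states $q_p$ and $q_{p'}$ are equivalent as soon as $\tilde{F}$ is invariant under the shift $p'-p$; as all indices inside one $H$-coset differ by an element of $H$, the states indexed by a single coset are pairwise equivalent. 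Therefore the accepting states of $\tilde{A}$ fall into at most $mn$ equivalence classes, which is exactly the assertion that each of the $mn$ shifted base positions produced in the first step contributes at most one accepting state to the minimal automaton. It follows that $\asc(L(A)L(B)^{-1})\leq mn$, completing the reverse inclusion.

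The step I expect to be the main obstacle is this last one: turning the informal ``all cyclic replications are equivalent'' into the clean statement that $\tilde{F}$ is a union of at most $mn$ cosets of $H$ and that each coset collapses to a single accepting state. The care is needed to see that collisions between cosets arising from different pairs $(i,\ell)$ can only decrease the count, and that the \pfa\ property of $\tilde{A}$ is exactly what guarantees the shift-invariance of $\tilde{F}$ under $H$; the verification of the explicit form of $\tilde{F}$ parallels the corresponding step in the proof of Lemma~\ref{lem:quotient-n-equal-one-magic} and is otherwise routine.
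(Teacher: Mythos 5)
Your proof is correct and follows essentially the same route as the paper: the paper only sketches the argument in the paragraph preceding the lemma (shift the $m$ accepting positions of $A$ onto the $n$ accepting positions of $B$, then note that the cyclic replications by multiples of $k'$ are all equivalent because the quotient automaton is a \pfa), and your description of $\tilde{F}$ as a union of at most $mn$ cosets of $H=\{\,jk'\bmod k\mid j\geq 0\,\}$ is precisely a clean formalization of that collapse, with the forward inclusion correctly delegated to Lemmata~\ref{lem:quotient-n-equal-one-construct-geq2} and~\ref{lem:quotient-n-equal-one-magic}.
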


By using the previous four
Lemmata~\ref{lem:quotient-n-equal-one-construct},
\ref{lem:quotient-n-equal-one-magic},
\ref{lem:quotient-n-equal-one-construct-geq2},
and~\ref{lem:quotient-n-equal-one-magic-geq2} we deduce the following
corollary.

\begin{cor}\label{cor:quotient}
  We have
	$$
	g^{\asc,u}_{^{-1},\pfa}(m,n)=
	\begin{cases}
          \{0\} & \mbox{if $m=0$ or~$n=0$,}\\
          [1,mn] & \mbox{otherwise.}
	\end{cases}
	$$
	Therefore~$g^{\asc,u}_{^{-1},\pfa}(m,n)\neq
        g^{\asc,u}_{^{-1}}(m,n)$.
\end{cor}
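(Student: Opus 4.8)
The plan is to read off the three regimes of the corollary from the lemmata already established in this subsection. Fix $m,n$. If $m=0$ or $n=0$, then one of the two input languages is empty, and since $K\emptyset^{-1}=\emptyset L^{-1}=\emptyset$ the quotient is empty; its minimal \dfa\ carries no accepting state, so $g^{\asc,u}_{^{-1},\pfa}(m,n)=\{0\}$, which is exactly the first two cases. For $m,n\geq 1$ I would invoke Lemma~\ref{lem:quotient-n-equal-one-magic-geq2} verbatim: it asserts $g^{\asc,u}_{^{-1},\pfa}(m,n)=[1,mn]$, which is the ``otherwise'' case. Thus the corollary is genuinely an assembly step, the substance having been carried by the four lemmata.

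Since all the real work sits inside Lemma~\ref{lem:quotient-n-equal-one-magic-geq2}, that is where I would locate the main obstacle, and I would want to be sure of its mechanism before assembling. Its reachability direction ($[1,mn]\subseteq g^{\asc,u}_{^{-1},\pfa}(m,n)$) is supplied by the constructions of Lemmata~\ref{lem:quotient-n-equal-one-construct} and~\ref{lem:quotient-n-equal-one-construct-geq2}; the hard direction is the magic part $g^{\asc,u}_{^{-1},\pfa}(m,n)\subseteq[1,mn]$. For minimal unary \pfa s $A$ (cycle length $k$, accepting indices $I_A$ with $|I_A|=m$) and $B$ (cycle length $k'$, accepting indices $I_B$ with $|I_B|=n$), the quotient is accepted by the \pfa~$\tilde A$ obtained from $A$ by keeping its transitions and taking accepting indices $\tilde I=\{(i-\ell-jk')\bmod k\mid i\in I_A,\ \ell\in I_B,\ j\geq 0\}$ --- precisely the ``shift onto $n$ positions, then replicate cyclically by multiples of $k'$'' description stated before the lemma. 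The decisive observation is that $\{-jk'\bmod k\mid j\geq 0\}$ is the subgroup $H=\langle\gcd(k,k')\rangle$ of $\Z/k\Z$, so $\tilde I$ is a union of at most $mn$ cosets of $H$; and because $\tilde I$ is $H$-invariant and $\tilde A$ is a permutation automaton, any two states whose indices lie in one $H$-coset have identical futures and collapse to a single state under minimization. Hence the minimal \dfa\ has at most $mn$ accepting states (invoking the a-minimality theorem of~\cite{Da16}), while nonemptiness of $\tilde I$ (it contains $(i-\ell)\bmod k$ for any $i\in I_A$, $\ell\in I_B$) forces at least one. This collapse-of-replications argument is the crux: it is what makes a large-looking accepting set shrink back into $[1,mn]$, and it relies essentially on $\tilde A$ being a \pfa.

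Finally I would settle the trailing inequality $g^{\asc,u}_{^{-1},\pfa}(m,n)\neq g^{\asc,u}_{^{-1}}(m,n)$ by comparison with the general-\dfa\ formula recalled at the start of the subsection: for $m,n\geq 1$ the unrestricted value is $\{0\}\cup\N$, which contains both $0$ and every integer exceeding $mn$, whereas the \pfa\ value $[1,mn]$ contains neither, so a single separating element (say $0$) already establishes strict inequality. In short, the only care needed for the corollary itself is the bookkeeping of the empty-language cases and exhibiting one separating value; the genuine difficulty --- the coset-collapse bounding the accepting states by $mn$ --- has already been discharged inside Lemma~\ref{lem:quotient-n-equal-one-magic-geq2}.
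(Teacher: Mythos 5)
Your proposal is correct and matches the paper's own treatment: the corollary is deduced by assembling the four quotient lemmata (with the $m=0$ or $n=0$ cases handled via $K\emptyset^{-1}=\emptyset L^{-1}=\emptyset$, and the inequality read off by comparing $[1,mn]$ with $\{0\}\cup\N$). Your additional sketch of the coset-collapse mechanism inside Lemma~\ref{lem:quotient-n-equal-one-magic-geq2} is consistent with the paper's ``shift then cyclically replicate, and all replications are equivalent'' argument, but is not needed for the corollary itself.
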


The accepting state complexity for the quotient operation on languages
accepted by permutation automata with larger input alphabets has to be
left open and is subject to further research.

% \nocite{*}
%\bibliographystyle{eptcs} 
%\bibliography{shortmarkus,markus}

\def\lastmodified{23.12.2008}\def\id#1{#1}

\end{document}